\keywords{V-Formation, Model Predictive Control, Markov Decision Processes, Controller-Attacker Games}
\newcommand{\xv}{{\boldsymbol {x}}}
\newcommand{\vv}{{\boldsymbol {v}}}
\newcommand{\va}{{\boldsymbol {a}}}
\newcommand{\vs}{{\boldsymbol {s}}}
\newcommand{\vd}{{\boldsymbol {d}}}
\newcommand{\VM}{{\it VM}}
\newcommand{\CV}{{\it CV}}
\newcommand{\UB}{{\it UB}}
 \newcommand{\ampc}{\mathtt{AMPC}}
 \newcommand{\dampc}{\mathtt{DAMPC}} 
 \newcommand{\lampc}{\mathtt{LocalAMPC}}
\newcommand{\M}{{\mathcal{M}}}
\newcommand{\PD}{{\it PD}}
\newcommand{\comment}[1]{{}}
\DeclareMathOperator*{\argmin}{\arg\min}
\newtheorem{definition}{Definition}
\newtheorem{proposition}{Proposition}
\newtheorem{corollary}{Corollary}
\newtheorem{remark}{Remark}
\theoremstyle{plain}\newtheorem{lemma}[thm]{Lemma} 
\newglossaryentry{pso}
{
    name=PSO,
    description={PSO}
}
\newglossaryentry{isp}
{
    name=isp,
    description={isp}
}
\newglossaryentry{mpc}
{
    name=MPC,
    description={MPC}
} 
\newglossaryentry{mdp}
{
    name=MDP,
    description={MDP}
} 
\begin{document}

\title{V-Formation as Model Predictive Control}

\author[R.~Grosu]{Radu Grosu}	
\address{Cyber-Physical Systems Group, Technische Universit\"at Wien, Austria}	
\email{radu.grosu@tuwien.ac.at}  

\author[A.~Lukina]{Anna Lukina}	
\address{Institute of Science and Technology Austria}	
\email{anna.lukina@ist.ac.at}  

\author[S.A.~Smolka]{Scott A. Smolka}	
\address{Department of Computer Science, Stony Brook University, Stony Brook, NY, USA}	
\email{sas@cs.stonybrook.edu}  

\author[A.~Tiwari]{Ashish Tiwari}	
\address{Microsoft Research, USA}	
\email{Ashish.Tiwari@microsoft.com}  

\author[V.~Varadarajan]{Vasudha Varadarajan}	
\address{Department of Computer Science, Stony Brook University, Stony Brook, NY, USA}	
\email{vvaradarajan@cs.stonybrook.edu}  

\author[X.~Wang]{Xingfang Wang}	
\address{Department of Computer Science, Stony Brook University, Stony Brook, NY, USA}	
\email{wxingfang@cs.stonybrook.edu}  





\begin{abstract}

We present recent results that demonstrate the power of viewing the problem of V-formation in a flock of birds as one of Model Predictive Control (MPC). The V-formation-MPC marriage can be understood in terms of the problem of synthesizing an optimal plan for a continuous-space and continuous-time Markov decision process (MDP), where the goal is to reach a target state that minimizes a given cost function.

The first result we consider is ARES, an efficient approximation algorithm for 
generating optimal plans (action sequences) that take an 
initial state of an \gls{mdp} to a state 
whose cost is below a specified (convergence) threshold.  ARES uses Particle Swarm Optimization, with \emph{adaptive sizing} for 
both the receding horizon and the particle swarm.  Inspired 
by Importance Splitting, the length of the horizon and the number of 
particles are chosen such that at least one particle reaches a 
\emph{next-level} state, i.e., a state where the cost decreases by a 
required delta from the previous-level state.  The level relation on states and the plans constructed by ARES implicitly define a Lyapunov function and an optimal policy, respectively, both of which could be explicitly generated by applying ARES to all states of the \gls{mdp}, up to some topological equivalence relation.
We assess the effectiveness of ARES by statistically evaluating its rate of success in generating optimal plans for V-formation. 
  
ARES can alternatively be viewed as a model-predictive control (MPC) algorithm that utilizes an adaptive receding horizon, a technique we refer to as Adaptive MPC (AMPC).  We next present Distributed AMPC (DAMPC), a distributed version of AMPC that works with local neighborhoods. We introduce adaptive neighborhood resizing, whereby the neighborhood size is determined by the cost-based Lyapunov function evaluated over a global system state.  Our approach applies to reachability problems for any collection of entities that seek convergence from an arbitrary initial state to a desired goal state, where a notion of distance to the goal state(s) can be suitably defined. Our experimental evaluation shows that DAMPC can perform almost as well as centralized AMPC, while using only local information and a form of distributed consensus in each time step.
  
Finally, inspired by security attacks on cyber-physical systems (CPS), we introduce \emph{controller-attacker games}, where two players, a controller and an attacker, have antagonistic objectives. To highlight the power of adaptation, we formulate  a special case of controller-attacker games called V-formation games, where the attacker's goal is to prevent the controller from attaining V-formation. We demonstrate how adaptation in the design of the controller helps in overcoming certain attacks.
  
    
\end{abstract}

\maketitle

\section{Introduction}
 Cyber-physical systems (CPSs) comprised of multiple computing agents are often highly distributed and may exhibit emergent behavior.
V-formation in a flock of birds is a quintessential example of emergent behavior in a (stochastic) multi-agent system. V-formation brings numerous benefits to the flock. It is primarily known for being energy-efficient due to the \emph{upwash benefit} a bird in the flock enjoys from its frontal neighbor. It also offers a \emph{clear view} benefit, as no bird's field of vision is obstructed by another bird in the formation. Moreover, its collective spatial flock mass can be intimidating to potential predators.  It is therefore not surprising that interest in V-formation is on the rise~\cite{droneswarm,Bloomberg}.
Because of V-formation's intrinsic appeal, it is important to (i)~understand its control-theoretic foundations, (ii)~devise efficient algorithms for the problem, and (iii)~identify the vulnerabilities in these approaches to cyber-attacks.

This paper brings together our recent results on V-formation that show how the problem can be formulated in terms of Model Predictive Control (MPC), both centralized and distributed.  It also shows how an MPC-based formulation of V-formation can be used as a comprehensive framework for investigating cyber-attacks on this formation.  

We first consider \emph{Adaptive Receding-Horizon Synthesis of Optimal Plans} (ARES) \cite{lukina-tacas17}, an efficient approximation algorithm for generating optimal plans (action sequences) that take an initial state of an MDP to a state whose cost is below a specified (convergence) threshold. ARES uses Particle Swarm Optimization (PSO), with \emph{adaptive sizing} for both the receding horizon and the particle swarm.  Inspired by Importance Splitting, a sampling technique for rare events, the length of the horizon and the number of particles are chosen such that at least one particle reaches a \emph{next-level} state, that is, a state where the cost decreases by a required delta from the previous-level state. The level relation on states and the plans constructed by ARES implicitly define a Lyapunov function and an optimal policy, respectively, both of which could be explicitly generated by applying ARES to all states of the MDP, up to some topological equivalence relation.

We assess the effectiveness of ARES by statistically evaluating its rate of success in generating optimal plans
that bring a flock from an arbitrary initial state to a state exhibiting a single connected V-formation.  For flocks with 7 birds, ARES is able to generate a plan that leads to a V-formation in 95\% of the 8,000 random initial configurations within 63 seconds, on average. ARES can be viewed as a model-predictive controller (MPC) with an adaptive receding horizon, which we also call adaptive MPC (AMPC). We provide statistical guarantees of convergence. To the best of our knowledge, our adaptive-sizing approach is the first to provide \emph{convergence guarantees} in receding-horizon 
techniques. 

We next present DAMPC\cite{sac19}, a distributed, adaptive-horizon and adaptive-neighborhood algorithm for solving the stochastic reachability problem in multi-agent systems; specifically the flocking problem modeled as an MDP. In DAMPC, at each time step, every agent first calls a centralized, adaptive-horizon model-predictive control (AMPC) algorithm to obtain an optimal solution for its local neighborhood. Second, the agents derive the flock-wide optimal solution through a sequence of consensus rounds. Third, the neighborhood is adaptively resized using a flock-wide cost-based Lyapunov function. In this way DAMPC improves efficiency without compromising convergence. The proof of statistical global convergence is non-trivial and involves showing that $V$ follows a monotonically decreasing trajectory despite potential fluctuations in cost and neighborhood size.

We evaluate DAMPC's performance using statistical model checking, showing that DAMPC achieves considerable speed-up over AMPC (two-fold in some cases) with only a slightly lower convergence rate.  Smaller average neighborhood size and lookahead horizon demonstrate the benefits of the DAMPC approach for stochastic reachability problems involving any controllable multi-agent system that possesses a cost function.

Inspired by the emerging problem of CPS security, we lastly introduce the concept of \emph{controller-attacker games}~\cite{tiwari17}: a two-player stochastic game involving a controller and an attacker, which have antagonistic objectives.  A controller-attacker game is formulated in terms of an MDP, with the controller and the attacker jointly determining the MDP's transition probabilities. We also introduce \emph{V-formation games}, a class of controller-attacker games where the goal of the controller is to maneuver the plant (a simple model of flocking dynamics) into a V-formation, and the goal of the attacker is to prevent the controller from doing so.  Controllers in V-formation games utilize
AMPC, giving them extraordinary power: we prove that under certain controllability conditions, an AMPC controller can attain V-formation with probability~1. 

We evaluate AMPC's performance on V-formation games using statistical model checking.  Our results show that (a)~as we increase the power of the attacker, the AMPC controller adapts by suitably increasing its horizon, and thus demonstrates resiliency to a variety of attacks; and (b)~an intelligent attacker can significantly outperform its naive counterpart.

The rest of the paper is organized as follows. Section~\ref{sec:bg} provides background content in the form of our dynamic model of V-formation, stochastic reachability, and PSO. Sections~\ref{sec:ares}-\ref{sec:cag} present the ARES algorithm, the DAMPC algorithm, and controller-attacker games for V-formation, respectively. Section~\ref{sec:concl} offers our concluding remarks.

This paper was written on the occasion of Jos Baeten's retirement as general director of CWI and professor of theory of computing of ILLC.  Jos was a highly influential collaborator of the third author (Smolka), and remains a good friend and colleague.  Jos's feedback to Smolka on the invited talk he gave on V-formation at CONQUEST 2016 was an important impetus for moving the work forward.

\section{Background}
\label{sec:bg}


This section introduces the basic concepts and techniques needed to formulate and derive our results.

\subsection{Dynamic Model for V-formation}
\label{subsec:dynmodel}



 In our flocking model, each bird in the flock is modeled using four variables: a 2-dimensional vector $\xv$ denoting the position of the bird in a 2D space, and a 2-dimensional vector $\vv$ denoting the velocity of the bird. We use $s=\{\xv_i, \vv_i\}_{i=1}^B$ to denote a state of a flock with $B$ birds. 
 The \emph{control actions} of each bird are 2-dimensional accelerations $\va$ and 2-dimensional position displacements $\vd$ (see discussion of $\va$ and $\vd$ below). Both are random variables.

Let $\xv_i(t),\vv_i(t),\va_i(t)$, and $\vd_i(t)$ respectively denote the position, velocity, acceleration, and displacement of the $i$-th bird at time $t$, $1\leqslant i \leqslant B$.
%
%
The behavior of bird $i$ in discrete time is modeled as follows:
\vspace*{-1mm}\begin{eqnarray}
\label{eq:trans}
 \xv_i(t + 1) &=& \xv_i(t) + \vv_i(t)\label{eq:x} + \vd_i(t) \qquad  \nonumber\\
 \vv_i(t + 1) &=& \vv_i(t) + \va_i(t)\label{eq:v} 
 \label{eq:withdist}
\end{eqnarray}
The next state of the flock is jointly determined by the accelerations and the displacements based on the current state following Eq.~\ref{eq:trans}.

Every bird in our model~\cite{grosu2014isola} moves in 2-dimensional space performing acceleration actions determined by a global controller. When there is no external disturbance, the displacement term is zero and the equations are:

\vspace*{-5mm}
\begin{align}
\xv_i(t + 1) &= \xv_i(t) + \vv_i(t)\notag\\
\vv_i(t + 1) &= \vv_i (t)+ \va_i(t)
\label{eq:nodist}
\end{align}

\vspace*{-1mm} The controller detects the positions and velocities 
of all birds through sensors, and uses this information to compute an optimal 
acceleration for the entire flock.  A bird uses its own component of the
solution to update its velocity and position.

We extend this discrete-time dynamical model to a (deterministic) MDP by adding a
cost (fitness) function\footnote{A classic 
MDP~\cite{russellnorvig} is obtained by adding sensor/actuator or wind-gust noise, which are the case we are addressing in the follow-up work.} based on the following metrics inspired by~\cite{yang2016love}:

\begin{itemize}
	\item \emph{Clear View} ($\CV$). A bird's visual field is a cone with 
    angle $\theta$ that can be blocked by the wings of other birds.  We define
    the clear-view metric by accumulating the percentage of a bird's visual 
    field that is blocked by other birds.  Fig.~\ref{fig:fitness} (left) illustrates 
    the calculation of the clear-view metric.  
        Let $B_{ij}(\xv_i, \vv_i, \xv_j)$ be the part of the angle subtended by the wing of Bird $j$ on the eye of Bird $i$ that intersects with Bird $i$'s visual cone with angle $\theta$. Then,
        the clear view for Bird $i$, 
        $\CV_i(\xv, \vv)$, is defined as $|\cup_{j\neq i} B_{ij}(\xv_i, \vv_i, \xv_j)|/\theta$, and the total 
        clear view, $\CV(\xv,\vv)$, is defined as $\sum_i \CV_i(\xv, \vv)$.
        The optimal value in a V-formation 
    is $\CV^*{=}\,0$, as all birds have a clear view.
    Note that the value $B_{ij}$ can be computed using Bird $i$'s velocity and position, and Bird $j$'s position using standard trigonometric functions.
\item \emph{Velocity Matching} ($\VM$). The accumulated 
	differences between the velocity of each bird and all other birds, 
    summed up over all birds in the flock defines $\VM$. Fig.~\ref{fig:fitness} (middle) depicts the values of $\VM$ in a velocity-unmatched flock. 
    Formally, $\VM(\xv, \vv) = \sum_{i > j} (||\vv_i-\vv_j||/(||\vv_i||+||\vv_j||))^2$.
    The optimal
    value in a V-formation is $\VM^*{=}\,0$, as all birds will have the same 
    velocity (thus maintaining the V-formation).
	\vspace*{1mm}\item \emph{Upwash Benefit} ($\UB$). The trailing upwash is 
    generated near the wingtips of a bird, while downwash is generated near 
    the center of a bird.  We accumulate all birds' upwash benefits using a 
    Gaussian-like model of the upwash and downwash region, as shown in 
    Fig.~\ref{fig:fitness} (right) for the right wing. 
        Let $h_{ij}$ be the projection of the 
        vector $\xv_j - \xv_i$ along the wing-span of Bird $i$.
        Similarly, let $g_{ij}$ be the projection of $\xv_j - \xv_i$ 
        along the direction of $\vv_i$.
        Specifically, the upwash benefit $\UB_{ij}$ for Bird $i$ coming from Bird $j$ is given by
        \begin{eqnarray*}
         \UB_{ij} = \left\{
             \begin{array}{ll}
                 \alpha  S(|h_{ij}|) G(h_{ij}, g_{ij}, \mu_1, \Sigma_1) & \mbox{ if } |h_{ij}| \geq \frac{(4-\pi)w}{8} \wedge g_{ij} > 0
             \\
                 S(|h_{ij}|) G(h_{ij}, g_{ij}, \mu_1, \Sigma_1) & \mbox{ if } |h_{ij}| < \frac{(4-\pi)w}{8} \wedge g_{ij} > 0
             \\
               0 & \mbox{ otherwise}
             \end{array}
             \right.
        \end{eqnarray*}
        where $S(z) = \mathtt{erf}(2\sqrt{2}(z - \frac{(4-\pi)w}{8}))$ is the error function, which is a smooth approximation of the sign function, 
        $G_1(\vec{z}, \Sigma) = e^{(-\frac{1}{2}(\vec{z}^T \Sigma^{-1}\vec{z}))}$ is a 2D-Gaussian with mean at the origin, 
        and
        $G(y,z,\mu,\Sigma) = G_1( [|y|,|z|] - \mu, \Sigma)$ is a 2D-Gaussian shifted so that the mean is $\mu$.
        The parameter $w$ is the wing span, and
        $\mu_1 = [(12+\pi)w/16,1]$ is the relative position where upwash benefit is maximized. 
        The total upwash benefit, $\UB_i$, for Bird $i$ is $\sum_{j\neq 1}\UB_{ij}$.
        The maximum upwash a bird can obtain is upper-bounded by~1.  
        Since we are working with cost (that we want to minimize), we define
        $\UB(\xv, \vv) = \sum_i (1 - \min(\UB_i, 1))$. 
    The optimal value for $\UB$ in a V-formation is $\UB^*{=}\,1$, as the leader does not receive any upwash.
\end{itemize}

Finding smooth and continuous formulations 
of the fitness metrics is a key element of solving optimization problems.  The PSO 
algorithm has a very low probability of finding an optimal solution if 
the fitness metric is not well-designed.

\begin{figure}[t]
\centering	
\includegraphics[width=.28\textwidth]{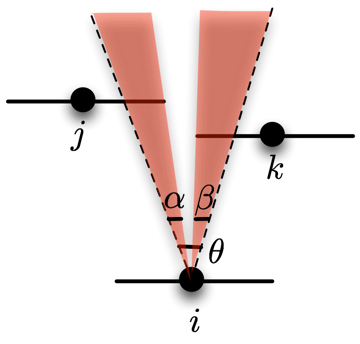}
\includegraphics[width=.3\textwidth]{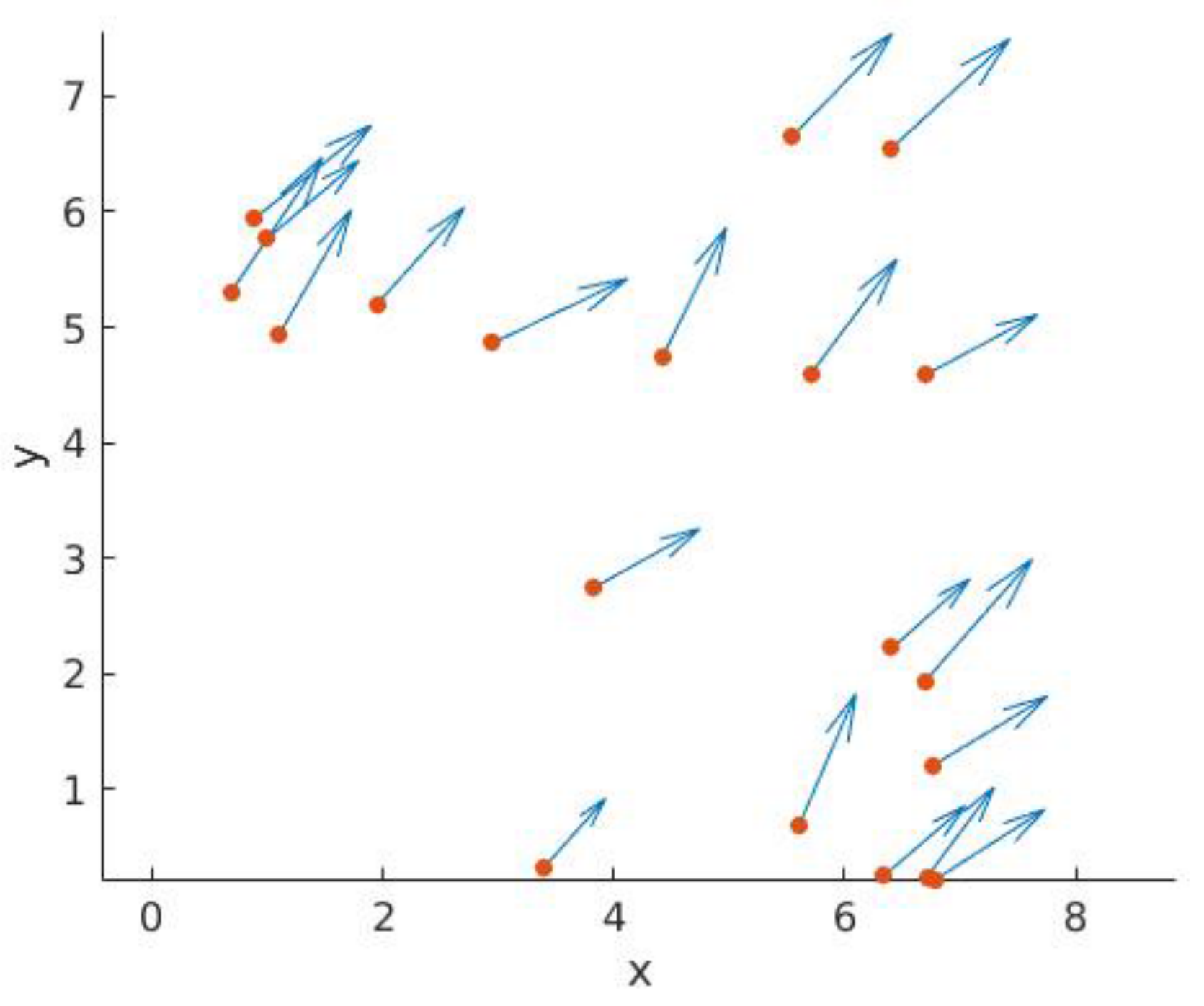}
\includegraphics[width=.36\textwidth]{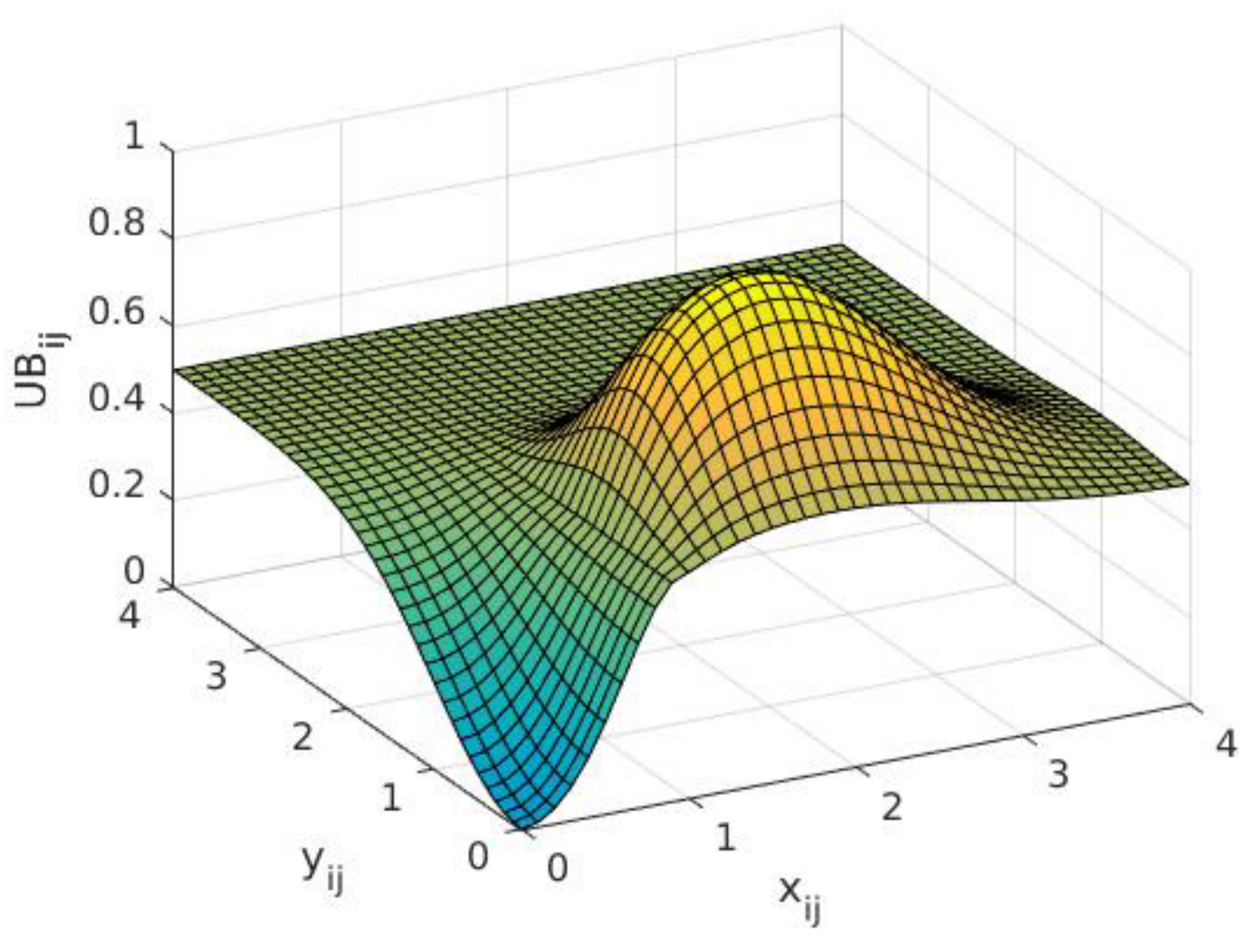}
\vspace*{-1mm}
\caption{
Illustration of the clear view ($\CV$), velocity matching  
($\VM$), and upwash benefit ($\UB$) metrics. 
Left: Bird $i$'s view is partially blocked by birds $j$ 
and $k$. Hence, its clear view is $\CV\,{=}\,(\alpha\,{+}\,\beta)/\theta$. 
Middle: A flock and its unaligned bird velocities results in a 
velocity-matching metric $\VM\,{=}\,6.2805$.  In contrast, $\VM\,{=}\,0$
when the velocities of all birds are aligned. 
Right: Illustration of the (right-wing) upwash benefit bird $i$ receives from bird $j$ depending on how it is positioned behind bird $j$.  Note that  
bird $j$'s downwash region is directly behind it.
}
\label{fig:fitness}
\vspace*{-4mm}
\end{figure}
Let $\boldsymbol{c}(t)\,{=}\,\{\boldsymbol{c}_i(t)\}_{i=1}^b\,{=}\,\{\xv_i(t), \vv_i(t)\}_{i=1}^b\,{\in}\,\mathbb{R}$ be a flock 
configuration at time-step $t$. Given the above metrics, the overall fitness (cost) metric $J$ 
is of a sum-of-squares combination of $\VM$, $\CV$, and $\UB$ defined as follows:
\vspace*{-1mm}
\begin{align}
J(\boldsymbol{c}(t),\va^h(t),{h}) = (\CV(\boldsymbol{c}_{\va}^{h}(t))-\CV^*)^2 &+ 
(\VM(\boldsymbol{c}_{\va}^{h}(t))-\VM^*)^2 \nonumber \\ & +(\UB(\boldsymbol{c}_{\va}^{h}(t))-\UB^*)^2,
\label{eq:fitness}
\end{align}
{}where ${h}$ is the receding prediction horizon (RPH), $\va^h(t)\,{\in}\,\mathbb{R}$ is a sequence of accelerations of length ${h}$, and $\boldsymbol{c}_{\va}^{h}(t)$ is 
the configuration reached after applying $\va^h(t)$ to $\boldsymbol{c}(t)$.
Formally, we have\vspace*{-2mm}
\begin{align}
\boldsymbol{c}_{\va}^{h}(t)= \{\xv_{\va}^{h}(t), \vv_{\va}^{h}(t)\} = \{\xv(t)+\sum_{\tau=1}^{{h}(t)}\vv(t+\tau), \vv(t)+\sum_{\tau=1}^{{h}(t)} \va^\tau(t) \},
\end{align}
where $\va^\tau(t)$ is the $\tau$th acceleration of $\va^h(t)$.
As discussed further in Section~\ref{sec:ares}, we allow  RPH ${h}(t)$ to be \emph{adaptive} in nature. 

The fitness function $J$ has an optimal value of $0$ in a perfect V-formation. Thus, there is a need to perform flock-wide minimization of $J$ at each time-step $t$ to obtain an optimal plan of  length $h$ of acceleration actions:\vspace*{-3mm} 

\begin{align}
&\textbf{opt-$\va$}^{h}(t)=\{\textbf{opt-$\va$}_i^{h}(t)\}_{i=1}^{b}=\argmin_{\va^h(t)}J(\boldsymbol{c}(t),\va^h(t),{h}).
\label{eq:opt}
\end{align}
\vspace*{-3mm}{} 

 The optimization is subject to the following constraints on the maximum 
velocities and accelerations: $||\vv_i(t)||\,{\leqslant}\,\vv_{max},
||\va^h_i(t)||\,{\leqslant}\,\rho||\vv_i(t)||$ $\forall$ $i\,{\in}\,\{1,\ldots,b\}$,
where $\vv_{max}$ is a constant and $\rho\,{\in}\,(0,1)$. The above constraints prevent us from using mixed-integer programming, we might, however, compare our solution to other continuous optimization techniques in the future.
The initial positions and velocities of each bird are selected at random 
within certain ranges, and limited such that the distance between any 
two birds is greater than a (collision) constant $d_{min}$, and small
enough for all birds, except for at most one, to feel the $\UB$. 
 
\subsection{V-Formation MDP}
\label{subsec:mdp}

This section defines Markov Decision Processes (MDPs) and the corresponding MDP formulated by Lukina et al.~\cite{lukina-tacas17} for the V-formation problem.

\begin{definition}
A \textbf{Markov decision process} (MDP) is a 5-tuple $\M=(S,A,T,J,I)$ consisting of a set of states $S$, a set of actions $A$, a transition function $T: S\,{\times}\,A\,{\times}\,S\,{\mapsto}\,[0,1]$, where $T(\vs,a,\vs^\prime)$ is the probability of transitioning from state $\vs$ to state $\vs'$ under action $\va$, a cost function $J:S\,{\mapsto}\,\mathbb{R}$, where $J(\vs)$ is the cost associated with state $\vs$, and an initial state distribution $I: S\,{\mapsto}\,[0,1]$.
\end{definition}

 The \emph{MDP $\M$ modeling a flock} of $B$ birds is defined as follows. The set of states $S$ is $S = \mathbb{R}^{4B}$, as each bird has a $2$D position and a $2$D velocity vector, and the flock contains $B$ birds. The set of actions $A$ is $A = \mathbb{R}^{2B}$, as each bird takes a $2$D acceleration action and there are $B$ birds. The cost function  $J$ is defined by Eq.~\ref{eq:fitness}. The transition function $T$ is defined by Eq.~\ref{eq:v}. As the acceleration vector $\va_i(t)$ for bird $i$ at time $t$ is a random variable, the state vector $\vs_i=\{\xv_i(t+1)$, $\vv_i(t+1)\}$ is also a random variable. The initial state distribution $I$ is a uniform distribution from a region of state space where all birds have positions and velocities in a range defined by fixed lower and upper bounds.

\subsection{Stochastic Reachability Problem}
\label{sec:reach}

Given the stochasticity introduced by PSO, the V-formation problem can be formulated in terms of a reachability problem for the Markov chain induced by the composition of a Markov decision process (MDP) and a controller.  

Before we can define traces, or executions, of $\M$, we need to fix a controller, or strategy, that determines which action from $A$ to use at any given state of the system. We focus on randomized strategies. A \emph{randomized strategy (controller)} $\sigma$ over $\M$ is a function of the form $\sigma: S\,{\mapsto}\,\PD(A)$, where $\PD(A)$ is the set of probability distributions over $A$. That is, $\sigma$ takes a state $\vs$ and returns an action consistent with the probability distribution $\sigma(\vs)$. Applying a policy $\sigma$ to the MDP $\M$ defines the Markov chain. $\M_{\sigma}$. We use the terms strategy and controller interchangeably.

In the bird-flocking problem, a controller would be a function that determines the accelerations for all the birds given their current positions and velocities. Once we fix a controller, we can iteratively use it to (probabilistically) select a sequence of flock accelerations. The goal is to generate a sequence of actions that takes an MDP from an initial state $\vs$ to a state $\vs^{*}$ with $J(\vs^{*})\,{\leqslant}\,\varphi$. 


\begin{definition} Let $\M\,{=}\,(S,A,T,J,I)$ be an MDP, and let $G \subseteq S$ be the set of goal states $G\,{=}\,\{\vs | J(\vs)\,{\leqslant}\,\varphi\}$ of $\M$. The \textbf{stochastic reachability problem} is to design a controller $\sigma: S\,{\mapsto}\,\PD(A)$ for $\M$ such that for a given $\delta$, the probability of the underlying Markov chain $\M_\sigma$ to reach a state in $G$ in $m$ steps (for a given $m$) starting from an initial state, is at least $1-\delta$.
\end{definition}

We approach the stochastic reachability problem by designing a controller and quantifying its probability of success in reaching the goal states.

\subsection{Particle Swarm Optimization}
Particle Swarm Optimization (PSO) is a randomized approximation algorithm for computing the value of a parameter minimizing a possibly nonlinear cost (fitness) function.  Interestingly, PSO itself is inspired by bird flocking~\cite{Kennedy95particleswarm}.  Hence, PSO assumes that it works with a flock of birds. 

Note, however, that in our running example, these birds are ``acceleration birds'' (or particles), and not the actual birds in the flock. Each bird has the same goal, finding food (reward), but none of them knows the location of the food. However, every bird knows the distance (horizon) to the food location. PSO works
by moving each bird preferentially toward the bird closest to food.

The work delineated in this paper uses Matlab-Toolbox $\texttt{particleswarm}$, which performs the classical version of PSO. This PSO creates a swarm of particles, of size say ${p}$, uniformly at random within a given bound on their positions and velocities. Note that in our example, each particle represents itself a flock of bird-acceleration sequences $\{\va_i^{{h}}\}_{i=1}^b$, where ${h}$ is the current length of the receding horizon. PSO further chooses a neighborhood of a random size for each particle $j$, $j\,{=}\,\{1,\ldots,p\}$, and computes the fitness of each particle. Based on the fitness values, PSO stores two vectors for $j$: its so-far personal-best position $\mathbf{x}_{P}^j(t)$, and its fittest neighbor's position $\mathbf{x}_{G}^j(t)$. The positions and velocities of each particle $j$ in the particle swarm $1\,{\leqslant}\,j\,{\leqslant}\,p$ are updated according to the following rule:

\vspace*{-4mm}
\begin{align}
\mathbf{v}^j(t+1) = \omega\cdot\mathbf{v}^j(t) &+ y_1\cdot \mathbf{u_1}(t+1)\otimes(\mathbf{x}_{P}^j(t)-\mathbf{x}^j(t))  \nonumber \\
&+ y_2\cdot \mathbf{u_2}(t+1)\otimes(\mathbf{x}_{G}^j(t)-\mathbf{x}^j(t))
\label{eq:swarm}
\end{align}
where $\omega$ is \emph{inertia weight}, which determines the trade-off between global and local exploration of the swarm (the value of $\omega$ is proportional to the exploration range); $y_1$ and $y_2$ are \emph{self adjustment} and \emph{social adjustment}, respectively; $\mathbf{u_1},\mathbf{u_2}\,{\in}\,{\rm Uniform}(0,1)$ are randomization factors; and $\otimes$ is the vector dot product, that is, $\forall$ random vector $\mathbf{z}$: $(\mathbf{z}_1,\ldots,\mathbf{z}_b)\otimes(\mathbf{x}_1^j,\ldots,\mathbf{x}_b^j)=(\mathbf{z}_1\mathbf{x}_1^j,\ldots,\mathbf{z}_b\mathbf{x}_b^j)$. 

If the fitness value 
for $\mathbf{x}^j(t+1)\,{=}\,\mathbf{x}^j(t)\,{+}\,\mathbf{v}^j(t+1)$ is lower than the one for $\mathbf{x}_{P}^j(t)$, then $\mathbf{x}^j(t+1)$ is assigned to $\mathbf{x}_{P}^j(t+1)$. The particle with the best fitness over the whole swarm becomes a global best for the next iteration. The procedure is repeated until the number of iterations reaches its maximum, the time elapses, or the minimum criteria is satisfied. For our bird-flock example we obtain in this way the best acceleration.



\section{Adaptive Receding-Horizon Synthesis of Optimal Plans (ARES)}
\label{sec:ares}

ARES~\cite{lukina-tacas17} is a general \emph{adaptive, receding-horizon synthesis algorithm} (ARES) that, given an MDP and one of its initial states, generates an optimal plan (action sequence) taking that state to a state whose cost is below a desired threshold. ARES implicitly defines an \emph{optimal, online policy-synthesis algorithm}, assuming plan generation can be performed in real-time.  ARES can alternatively be viewed as a model-predictive control (MPC) algorithm that utilizes an adaptive receding horizon, a technique we refer to as Adaptive MPC (AMPC).

ARES makes repeated use of PSO~\cite{Kennedy95particleswarm} to effectively generate a plan. This was in principle unnecessary, as one could generate an optimal plan by calling PSO only once, with a maximum plan-length horizon.  Such an approach, however, is in most cases impractical, as every unfolding of the MDP adds a number of new dimensions to the search space.  Consequently, to obtain adequate coverage of this space, one needs a very large number of particles, a number that is either going to exhaust available memory or require a prohibitive amount of time to find an optimal plan.


\subsection{The ARES Algorithm}
One could in principle solve the optimization problem defined in Sections~\ref{subsec:dynmodel} and~\ref{subsec:mdp} by calling PSO only once, with a horizon $h$ in $\mathcal{M}$ equaling the maximum length $m$ allowed for a plan. This approach, however, tends to lead to very large search spaces, and is in most cases intractable. Indeed, preliminary experiments with this technique applied to our running example could not generate any convergent plan.  
\begin{figure}[t]
\centering
\includegraphics[height=2in]{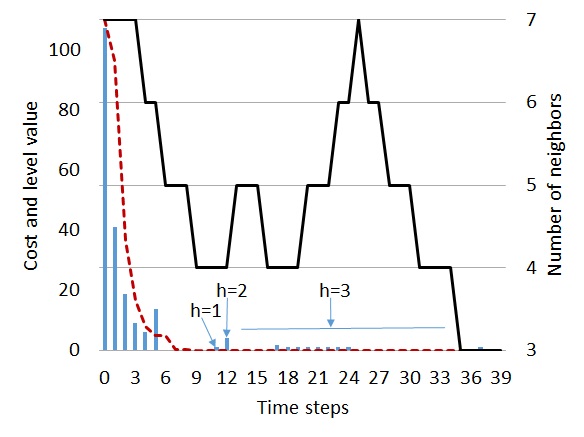}
\caption{Blue bars are the values of the cost function in every time step. Red dashed line is the cost-based Lyapunov function used for horizon and neighborhood adaptation. Black solid line is neighborhood resizing for the next step given the current cost. 
}
\label{fig:example_local}
\end{figure}

A more tractable approach is to make repeated calls to PSO with a small horizon length $h$.  The question is how small $h$ can be. \emph{The current practice in model-predictive control (MPC) is to use a fixed} $h$, $1\,{\leqslant}\,h\,{\leqslant}\,3$ (see the outer loop of Fig.~\ref{fig:approach}, where resampling and conditional branches are disregarded). Unfortunately, this forces the selection of \emph{locally-optimal plans} (of size less than three) in each call, and there is \emph{no guarantee of convergence} when joining them together.  In fact, in our running example, we were able to find plans leading to a V-formation in only $45\%$ of the time for $10,000$ random initial flocks.

Inspired by Importance Splitting (see Fig.~\ref{fig:levels}~(right) and Fig.~\ref{fig:approach}), we introduce the notion of a \emph{level-based horizon}, where level $\ell_0$ equals the cost of the initial state, and level $\ell_m$ equals the threshold $\varphi$.  Intuitively, by using an asymptotic cost-convergence function ranging from $\ell_0$ to $\ell_{m}$, and dividing its graph in $m$ equal segments, we can determine on the vertical axis a sequence of levels ensuring convergence. 

The asymptotic function ARES implements is essentially $\ell_{i}\,{=}\,\ell_0\,(m-i){/}\,m$, but specifically tuned for each particle. Formally, if particle $k$ has previously reached level equaling $J_k(s_{i-1})$, then its next target level is within the distance $\Delta_k\,{=}\,J_k(s_{i-1}){/}(m\,{-}\,i\,{+}\,1)$. In Fig.~\ref{fig:approach}, after passing the thresholds assigned to them, values of the cost function in the current state $s_i$ are sorted in ascending order $\{\widehat{J}_{k}\}_{k=1}^n$. The lowest cost $\widehat{J}_1$ should be apart from the previous level $\ell_{i-1}$ at least on its $\Delta_1$  for the algorithm to proceed to the next level $\ell_i\,{:=}\,\widehat{J}_1$.
\begin{figure}[t]
	\centering
	\includegraphics[height=4.15in]{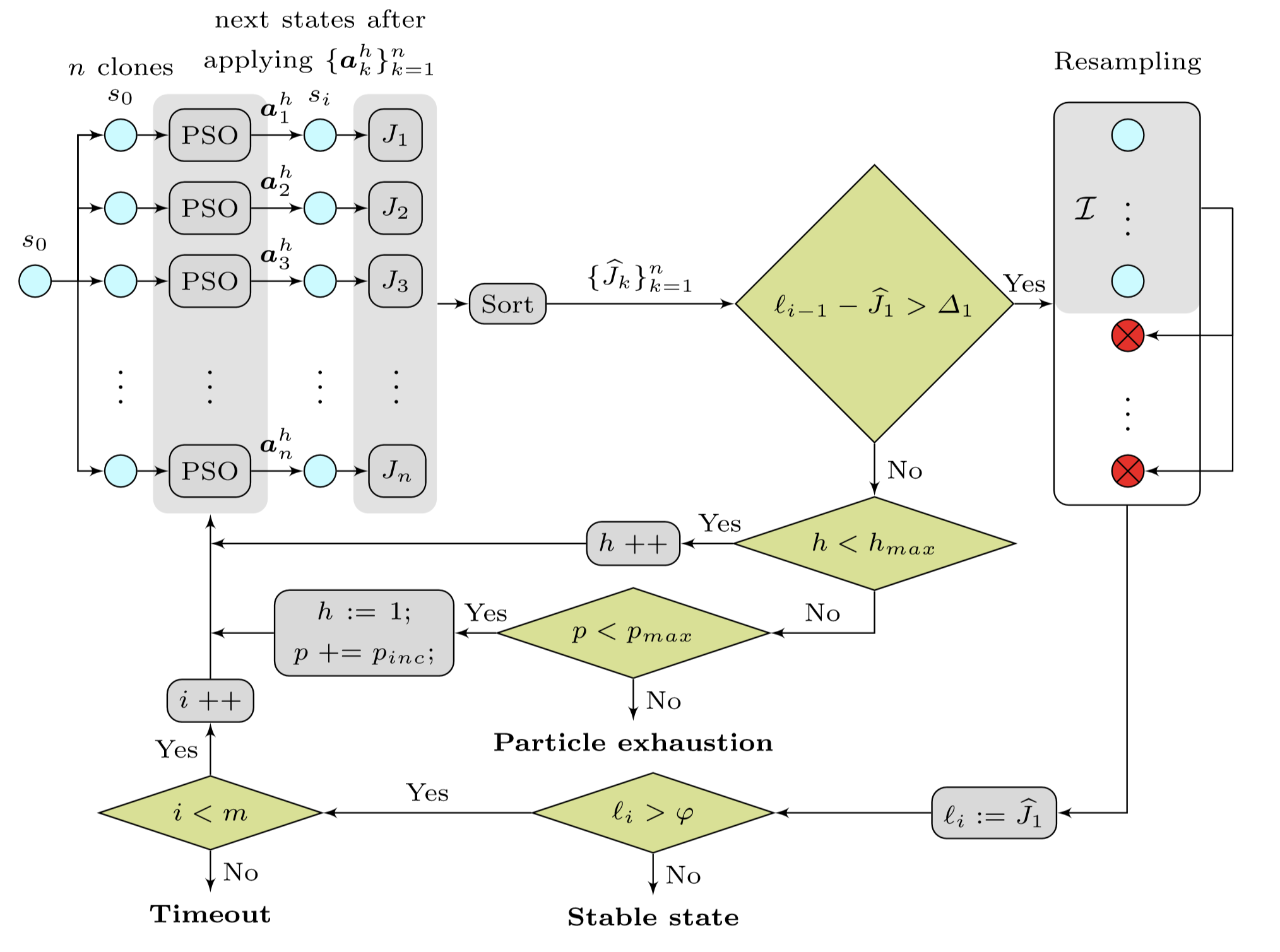}
    \vspace*{-2mm}
\caption{Graphical representation of ARES.}
	\label{fig:approach}
    \vspace*{3mm}
\end{figure}
\begin{algorithm}[b]
	\SetKwFunction{Cost}{Cost}
	\SetKwFunction{ImportanceSplitting}{ImportanceSplitting}
	\SetKwFunction{particleswarm}{particleswarm}
    \SetKwFunction{Simulate}{Simulate}
	\SetKwInOut{Input}{Input}
	\SetKwInOut{Output}{Output}
\ForEach {$\mathcal{M}_k\in \mathcal{M}$}
{
    $[\va^{h}_k,\mathcal{M}^{h}_k] \leftarrow$ \particleswarm{$\mathcal{M}_k,{p},h$}; \textit{// use 
    PSO in 
    order to determine best next action for the MDP $\mathcal{M}_k$ with RPH $h$\\}
    ${J}_k(s_i)\leftarrow$ 
    \Cost{$\mathcal{M}_k^h,\va^{h}_k,{h}$};
    \textit{// calculate cost function if applying the sequence of optimal actions of length $h$}\\
    \If{${J}_k(s_{i-1})-{J}_k(s_i)>\Delta_k$}
    {
    $\Delta_k\leftarrow{J}_k(s_{i})/(m-i);$ \textit{// new level-threshold}\\
    }
}
	\caption{Simulate ($\mathcal{M},h,i,\{\Delta_k,{J}_k(s_{i-1})\}_{k=1}^n$)}
	\label{alg:sim}
\end{algorithm}
\setlength{\floatsep}{0.1cm}

 The levels serve two purposes.  First, they implicitly define a Lyapunov function, which guarantees convergence.  If desired, this function can be explicitly generated for all states, up to some topological equivalence.  Second, the levels $\ell_{i}$ help PSO overcome local minima    (see Fig.~\ref{fig:levels}~(left)).  If reaching a next level requires PSO to temporarily pass over a state-cost ridge, then ARES incrementally increases the size of the horizon $h$, up to a maximum size $h_{max}$. For particle $k$, passing the thresholds $\Delta_k$ means that it reaches a new level, and the definition of  $\Delta_k$ ensures a smooth degradation of its threshold.

\begin{figure}[t]
	\centering
    \includegraphics[height=2.75in]{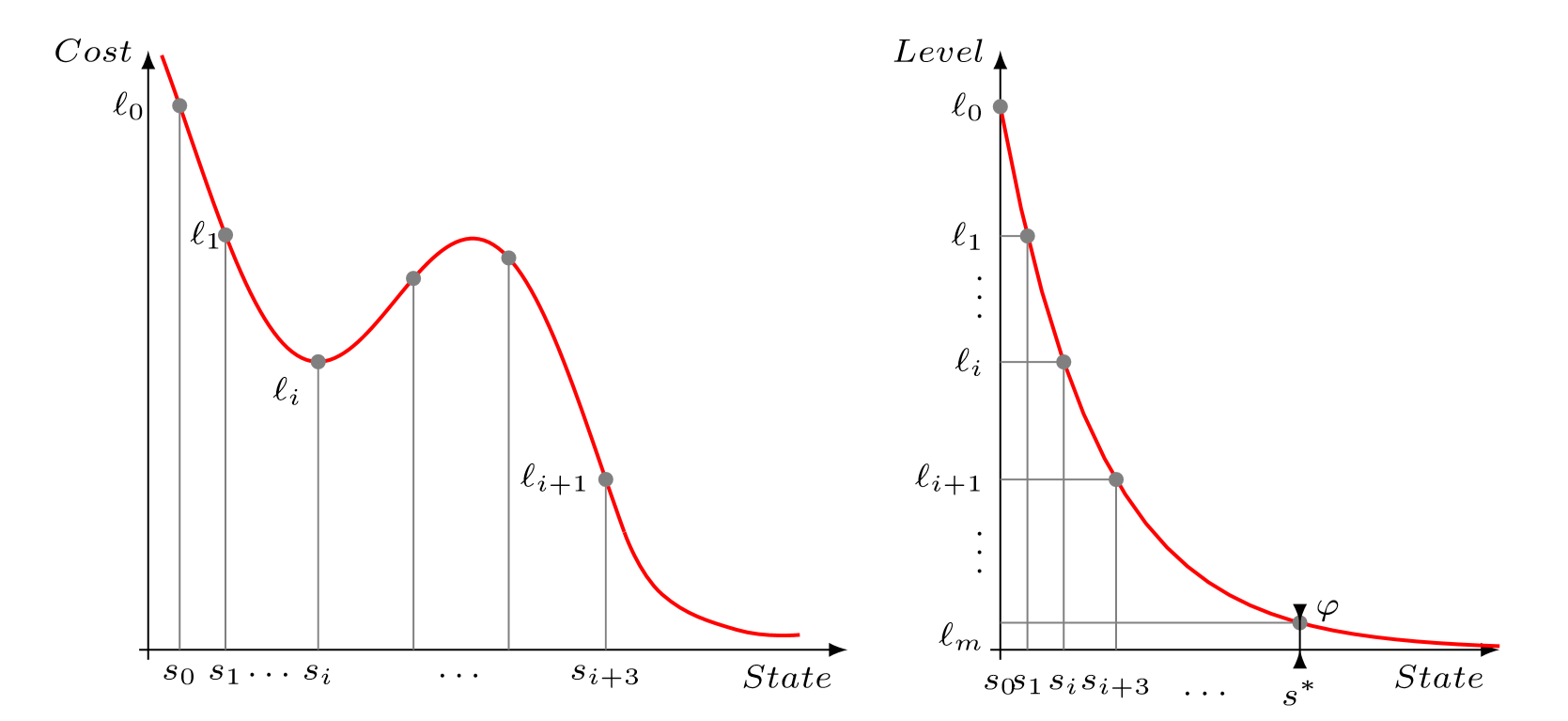}\vspace{-3ex}
\caption{Left: If state $s_0$ has cost $\ell_0$, and its successor-state $s_{1}$ has cost less than $\ell_{1}$, then a horizon of length~1 is appropriate. However, if $s_i$ has a local-minimum cost $\ell_i$, one has to pass over the cost ridge in order to reach level $\ell_{i+1}$, and therefore ARES has to adaptively increase the horizon to~3. Right: The cost of the initial state defines $\ell_0$ and the given threshold $\varphi$ defines $\ell_{m}$. By choosing $m$ equal segments on an asymptotically converging (Lyapunov) function (where the number $m$ is empirically determined), one obtains on the vertical cost-axis the levels required for ARES to converge.}   
	\label{fig:levels}
\end{figure}

 Another idea imported from IS and shown in Fig.~\ref{fig:approach}, is to maintain $n$ clones $\{\mathcal{M}_k\}_{k=1}^n$ of the MDP $\mathcal{M}$ (and its initial state) at any time $t$, and run PSO, for a horizon $h$, on each $h$-unfolding $\mathcal{M}^h_k$ of them. This results in an action sequence $\va^{h}_k$ of length $h$ (see Algo.~\ref{alg:sim}). This approach allows us to call  PSO for each clone and desired horizon, with a very small number of particles $p$ per clone.  

\SetAlgoSkip{}
\begin{algorithm}[t]
\BlankLine
	$\mathcal{I}\leftarrow$ Sort ascending $\mathcal{M}^h_{k}$ by their current costs;  \textit{// find indexes of MDPs whose costs are below the median among all the clones}\\
	\For{$k=1$ \KwTo $n$}
	{
		\eIf{$k\notin\mathcal{I}$}
		{
			Sample $r$ uniformly at random from $\mathcal{I}$;
			$\mathcal{M}_k \leftarrow \mathcal{M}_r^h$;\\
		}
		{
			$\mathcal{M}_k\leftarrow\mathcal{M}^h_k$; \textit{// Keep more successful MDPs unchanged} 
		}
	}
	\caption{Resample ($\{\mathcal{M}_k^h,{J}_k(s_{i})\}_{k=1}^n$)}
	\label{alg:resample}
\SetAlgoSkip{}
\end{algorithm}
\setlength{\floatsep}{0.1cm}

To check which particles have overcome their associated thresholds, we sort the particles according to their current cost, and split them in two sets: the successful set, having the indexes $\mathcal{I}$ and whose costs are lower than the median among all clones; and the unsuccessful set with indexes in $\{1,{\ldots},n\}\,{\setminus}\mathcal{I}$, which are discarded. The unsuccessful ones are further replenished, by sampling uniformly at random from the successful set $\mathcal{I}$ (see Algo.~\ref{alg:resample}).

The number of particles is increased $p\,{=}\,p\,{+}\,p_{inc}$ if no clone reaches a next level, for all horizons chosen.  Once this happens, we reset the horizon to one, and repeat the process. In this way, we adaptively focus our resources on escaping from local minima.  From the last level, we choose the state $s^{*}$ with the minimal cost, and traverse all of its predecessor states to find an optimal plan comprised of actions $\{\va^i\}_{1\leqslant i\leqslant m}$ that led MDP $\mathcal{M}$ to the optimal state $s^*$. In our running example, we select a flock in V-formation, and traverse all its predecessor flocks. The overall procedure of ARES is shown in Algo.~\ref{alg:ares}.

\SetAlgoSkip{}
\begin{algorithm}[!ht]
	\SetKwFunction{Fitness}{Fitness}
	\SetKwFunction{Resample}{Resample}
	\SetKwFunction{particleswarm}{particleswarm}
    \SetKwFunction{Simulate}{Simulate}
	\SetKwInOut{Input}{Input}
	\SetKwInOut{Output}{Output}
	
	\Input{$\mathcal{M},\varphi,{p}_{start},{p}_{inc},{p}_{max},{h}_{max},m,n$}
	\Output{$\{\va^i\}_{1\leqslant i\leqslant\,m}$ \textit{// synthesized optimal plans}}
	\BlankLine
	Initialize $\ell_0\leftarrow\inf$; $\{J_k(s_0)\}_{k=1}^n\leftarrow\inf$; ${p}\leftarrow{p}_{start}$; $i\leftarrow 1$;  ${h}\leftarrow 1$; $\Delta_k\leftarrow 0$;
	\BlankLine
		\While{($\ell_i > \varphi)$ $\vee$ $(i
	< m)$}
	{
     	\textit{// find and apply best actions with RPH $h$}\\
		$[\{\va_k^h,J_k(s_i),\mathcal{M}^h_{k}\}_{k=1}^n]\leftarrow$\Simulate{$\mathcal{M},h,i,\{\Delta_k,{J}_k(s_{i-1})\}_{k=1}^n$};
		$\widehat{J}_1 \leftarrow 
		sort({J}_1(s_i),\ldots,{J}_n(s_i))$; \textit{// find minimum cost among all the clones}\\	
		\eIf{$\ell_{i-1}-\widehat{J}_1>\Delta_1$}
		{
			$\ell_i\leftarrow\widehat{J}_1$; \textit{// new level has been reached}\\
            $i \leftarrow i + 1$;
			${h} \leftarrow 1$;
			${p} \leftarrow {p}_{start}$; \textit{// reset adaptive parameters}\\
			$\{\mathcal{M}_k\}_{k=1}^n\leftarrow$ 
			\Resample{$\{\mathcal{M}_k^h,{J}_k(s_{i})\}_{k=1}^n$};
		}
		{
			\eIf{${h} < {h}_{max}$}
			{
				${h} \leftarrow {h} + 1$; \textit{// improve time exploration}\\
			}
			{
				\eIf{${p} < {p}_{max}$}
				{
					${h} \leftarrow 1$;
					${p} \leftarrow {p} + 
					{p}_{inc}$; \textit{// improve space exploration}\\
				}
				{break;}
			}	
		}
	}
    Take a clone in the state with minimum cost $\ell_i=J(s^*_i)\leqslant\varphi$ at the last level $i$;\\
    \ForEach{$i$}
    {
    $\{s_{i-1}^*,\va^i\}\leftarrow Pre(s_i^*);$ \textit{// find predecessor and corresponding action}\\
    }
	\caption{ARES}
	\label{alg:ares}
\end{algorithm}
\setlength{\floatsep}{0.1cm}

\begin{proposition}[Optimality and Minimality] (1) Let $\mathcal{M}$ be an MDP.
For any initial state $s_0$ of $\mathcal{M}$, ARES is able to solve the optimal-plan synthesis problem for $\mathcal{M}$ and $s_0$. 
(2)~An optimal choice of $m$ in function $\Delta_k$, for some particle $k$, ensures that ARES also generates the shortest optimal plan.
\end{proposition}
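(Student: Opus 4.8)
The plan is to treat the two parts separately, using the level structure as the central device in both. For part~(1), I would first argue that the sequence of levels $\ell_0 > \ell_1 > \cdots > \ell_m = \varphi$ constructed by ARES is well-defined and strictly decreasing: by the definition $\Delta_k = J_k(s_{i-1})/(m-i+1)$, every accepted transition forces a strict drop of at least $\Delta_1 > 0$ in the cost of the retained clone, so the level sequence is monotone. This monotone sequence is precisely the cost-based Lyapunov function alluded to in the text, and it certifies that the outer loop of ARES terminates at a state with cost $\leq \varphi$ \emph{provided} every level is eventually reached. The crux is therefore to show that each individual level transition succeeds. Here I would invoke the adaptive resource schedule: whenever no clone crosses its threshold $\Delta_k$ at the current horizon, ARES first grows $h$ up to $h_{max}$ and then grows $p$ up to $p_{max}$ while resetting $h$. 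Combined with the probabilistic completeness of PSO --- the property that, as the number of particles and the exploration range grow, the swarm's best position converges to the global minimizer of $J$ over the $h$-unfolded action space --- this guarantees that for $h$ and $p$ large enough at least one clone attains the required decrease. An induction on the level index $i$ then yields that the goal level $\ell_m = \varphi$ is reached, and the plan reconstructed by following $Pre(\cdot)$ backward from the minimal-cost clone at the last level is a valid optimal plan, each of whose segments was produced by (approximately) minimizing $J$.

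For part~(2), I would make the plan length explicit as $L = \sum_{i=1}^{m} h_i$, where $h_i \in \{1,\ldots,h_{max}\}$ is the horizon at which level $i$ is finally crossed, and contrast it with the length $L^{*}$ of a genuinely shortest plan reaching cost $\leq \varphi$. The key observation is that $m$ controls the granularity of the level partition through $\Delta_k$: a large $m$ produces many small required drops, forcing many iterations (each contributing at least $1$ to $L$) even though each drop could be realized by a single action; a small $m$ produces few large drops, each of which may demand a horizon $h_i > 1$ to pass over intervening cost ridges. I would then argue that both effects are minimized simultaneously exactly when $m$ is chosen so that every level can be crossed with $h_i = 1$ and no level is redundant --- that is, when $m$ equals the number of strictly cost-decreasing single steps along a shortest plan. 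For this optimal $m$ one obtains $L = m \cdot 1 = L^{*}$, whereas any larger or smaller $m$ strictly increases $L$ by the accounting above.

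The main obstacle I anticipate is in part~(1): the appeal to PSO's global convergence. PSO is a heuristic without an unconditional optimality guarantee, so the honest version of this step must either assume a convergence property of the underlying \texttt{particleswarm} routine as an explicit hypothesis, or weaken the conclusion to a statistical statement (``with probability approaching $1$ as $p \to p_{max}$''), consistent with the statistical guarantees the paper advertises elsewhere. Reconciling the deterministic phrasing of the proposition with the randomized nature of PSO is the delicate point, and I would handle it by stating the PSO-completeness assumption explicitly and deriving convergence of ARES conditionally on it. A secondary difficulty, in part~(2), is that optimal plans may have non-monotone cost profiles, so identifying ``the number of strictly cost-decreasing steps'' with a shortest plan requires showing that a monotone, Lyapunov-respecting plan of the same length exists; I would close this gap with a short exchange/rerouting argument on the underlying state sequence.
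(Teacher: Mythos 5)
Your proposal is correct and follows essentially the same route as the paper's own (very brief) proof sketch: for part~(1), the dynamic threshold $\Delta_k$ forces a monotone, Lyapunov-style descent of the levels, and the adaptive growth of $h$ and $p$ guarantees each level is eventually crossed with probability~1; for part~(2), minimality comes from the choice of $m$ together with ARES's preference for the shortest horizon. The only real difference is one of explicitness: you state as a hypothesis the PSO-completeness property and flag the non-monotone-cost subtlety in~(2), both of which the paper leaves implicit behind the phrases ``with probability 1'' and ``given enough time and memory.''
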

\vspace*{-2mm}
\begin{proof}[Sketch]
(1)~The dynamic-threshold function $\Delta_k$ ensures that the initial cost in $s_0$ is continuously decreased until it falls below $\varphi$.  Moreover, for an appropriate number of clones, by adaptively determining the horizon and the number of particles needed to overcome $\Delta_k$, ARES always converges, with probability 1, to an optimal state, given enough time and memory. (2)~This follows from  convergence property (1), and from the fact that ARES always gives preference to the shortest horizon while trying to overcome $\Delta_k$.
\end{proof}
%

The optimality referred to in the title of the paper is in the sense of (1).  One, however, can do even better than (1), in the sense of (2), by empirically determining parameter $m$ in the dynamic-threshold function $\Delta_k$.  Also note that ARES is an \emph{approximation algorithm}, and may therefore return non-minimal plans.  Even in these circumstances, however, the plans will still lead to an optimal state.  This is a V-formation in our flocking example.

\subsection{Evaluation of ARES}
\newcommand{\majExp}{{8,000}}
To assess the performance of our approach, we developed a simple simulation environment in Matlab. All experiments were run on an Intel Core i7-5820K CPU with 3.30 GHz and with 32GB RAM available. 

We performed numerous experiments with a varying number of birds. Unless stated otherwise, results refer to {\majExp} experiments with 7 birds with the following parameters: ${p}_{start}\,{=}\,10$, ${p}_{inc}\,{=}\,5$, ${p}_{max}\,{=}\,40$, $\ell_{max}\,{=}\,20$, ${h}_{max}\,{=}\,5$, $\varphi\,{=}\,10^{-3}$, and $n\,{=}\,20$. The initial configurations were generated independently uniformly at random subject to the following constraints: 
\begin{enumerate}\itemsep=0em
\item Position constraints: $\forall\:i\,{\in}\,\{1,{\ldots},7\}.\:\xv_i(0)\in[0,3]\times[0,3]$.
\vspace*{1mm}\item Velocity constraints: $\forall\:i\,{\in}\,\{1,{\ldots},7\}.\:\vv_i(0)\in[0.25,0.75]\times[0.25,0.75]$.
\end{enumerate}

\begin{figure}[t]
	\centering
	\includegraphics[width=.49\textwidth]{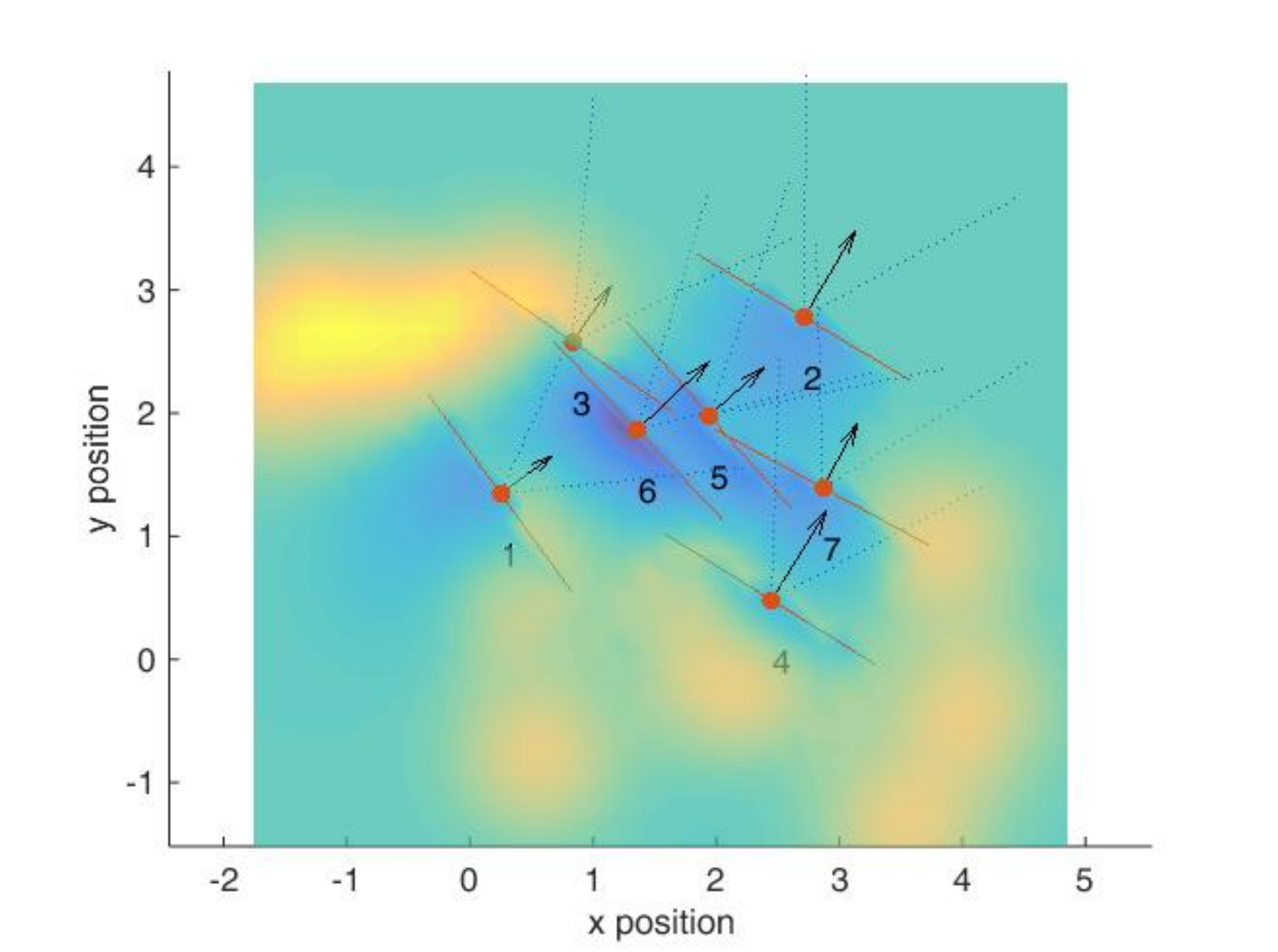}
	\includegraphics[width=.49\textwidth]{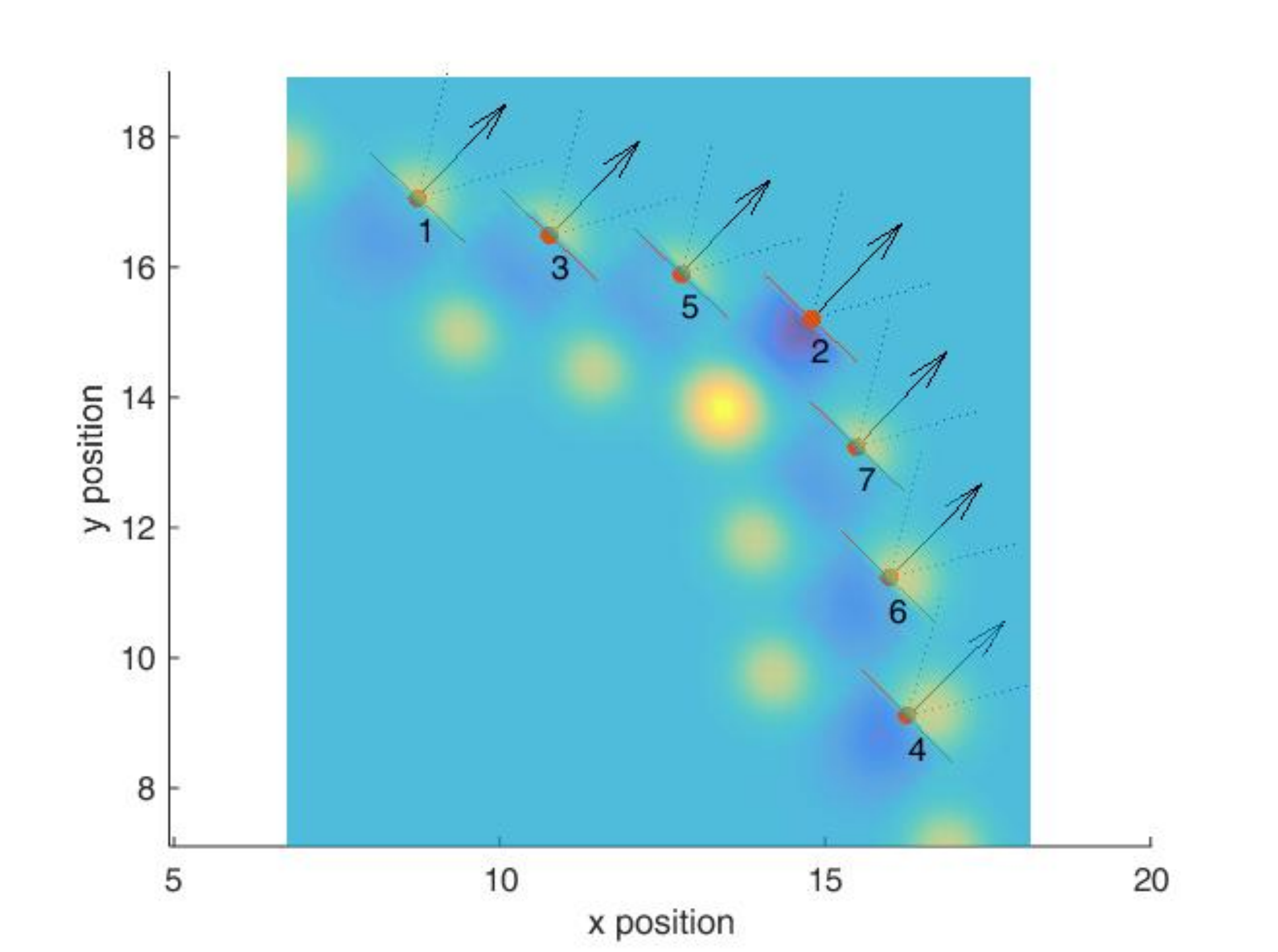}
    \vspace*{-2mm}
\caption{Left: Example of an arbitrary initial configuration of 7 birds. Right: The V-formation obtained by applying the plan generated by ARES. In the figures, we show the wings of the birds, bird orientations,  bird speeds (as scaled arrows), upwash regions in yellow, and downwash regions in dark blue.}
    \label{fig:form}
    \vspace*{3mm}
\end{figure}

\begin{table}
	\scriptsize
	\centering
	\caption{Overview of the results for \majExp$\:$experiments with 7 birds}
	\begin{tabular}{l c c c c c c c c }  
		\toprule
		& \multicolumn{4}{c}{{\textsc{Successful}}} & 
		\multicolumn{4}{c}{{\textsc{Total}}} \\
		\cmidrule(l){2-5}\cmidrule(l){6-9}
		No. Experiments~~~~ & \multicolumn{4}{c}{7573}  & 
		\multicolumn{4}{c}{8000} 
		\\
		\cmidrule(l){2-5}\cmidrule(l){6-9}
		& {\centering\textsc{Min}} & {\textsc{Max}}& {\centering\textsc{Avg}} & 
		{\textsc{Std}} 
		& {\textsc{Min}} & {\textsc{Max}} & {\textsc{Avg}} & 
		{\textsc{Std}} \\
		
		\midrule 
		Cost, ${J}$
			& 2.88$\cdot10^{-7}$	& 9$\cdot10^{-4}$ 	&	4$\cdot10^{-4}$  & 3$\cdot10^{-4}$	
			& 2.88$\cdot10^{-7}$	& 1.4840 	&	0.0282  & 0.1607   	\\
		Time, $t$
			& 23.14s 	& 310.83s 	& 63.55s	& 22.81s	
			& 23.14s  	& 661.46s 	& 64.85s 	& 28.05s	\\
		Plan Length, $i$		
			& 7 		& 20 		& 12.80		& 2.39 			
			& 7 		& 20 		& 13.13		& 2.71  \\
		RPH, ${h}$	
			& 1 		& 5 		& 1.40		& 0.15 					
			& 1 		& 5 		& 1.27		& 0.17 \\
	\bottomrule
\end{tabular}
\label{tab:res5000overview}
\vspace*{-3mm}
\end{table}

Table~\ref{tab:res5000overview} gives an overview of the results with respect to the \majExp$\:$ experiments we performed with 7 birds for a maximum of 20 levels. The average fitness across all experiments is $0.0282$, with a standard deviation of $0.1654$.  We achieved a success rate of $94.66\%$ with fitness threshold $\varphi=10^{-3}$. The average fitness is higher than the threshold due to comparably high fitness of unsuccessful experiments. When increasing the bound for the maximal plan length $m$ to 30 we achieved a $98.4\%$ success rate in 1,000 experiments at the expense of a slightly longer average execution time.

\vspace{2.5ex}
\begin{table}
\vspace*{5mm}
	\scriptsize
	\centering
	\caption{Average duration for 100 experiments with various number of birds}
	\begin{tabular}{l cc c c c }  
		\toprule
		No. of birds & & 3 & 5 & 7 & 9 \\
      	\cmidrule(l){2-6}
        Avg. duration & & 4.58s & 18.92s & 64.85s & 269.33s \\
		\bottomrule
\end{tabular}
\label{tab:res100time}
\end{table}

The left plot in Fig.~\ref{fig:aheads_pso} depicts the resulting distribution of execution times for \majExp$\:$ runs of our algorithm, where it is clear that, excluding only a few outliers from the histogram, an arbitrary configuration of birds (Fig.~\ref{fig:form} (left)) reaches V-formation  (Fig.~\ref{fig:form} (right)) in around 1~minute. The execution time rises with the number of birds as shown in Table~\ref{tab:res100time}.

In Fig.~\ref{fig:aheads_pso}, we illustrate for how many experiments the algorithm had to increase RPH $h$ (Fig.~\ref{fig:aheads_pso} (middle)) and the number of particles used by PSO ${p}$ (Fig.~\ref{fig:aheads_pso} (right))  to improve time and space exploration, respectively.
\begin{figure}[!htb]
	\centering
    \includegraphics[width=.328\textwidth]{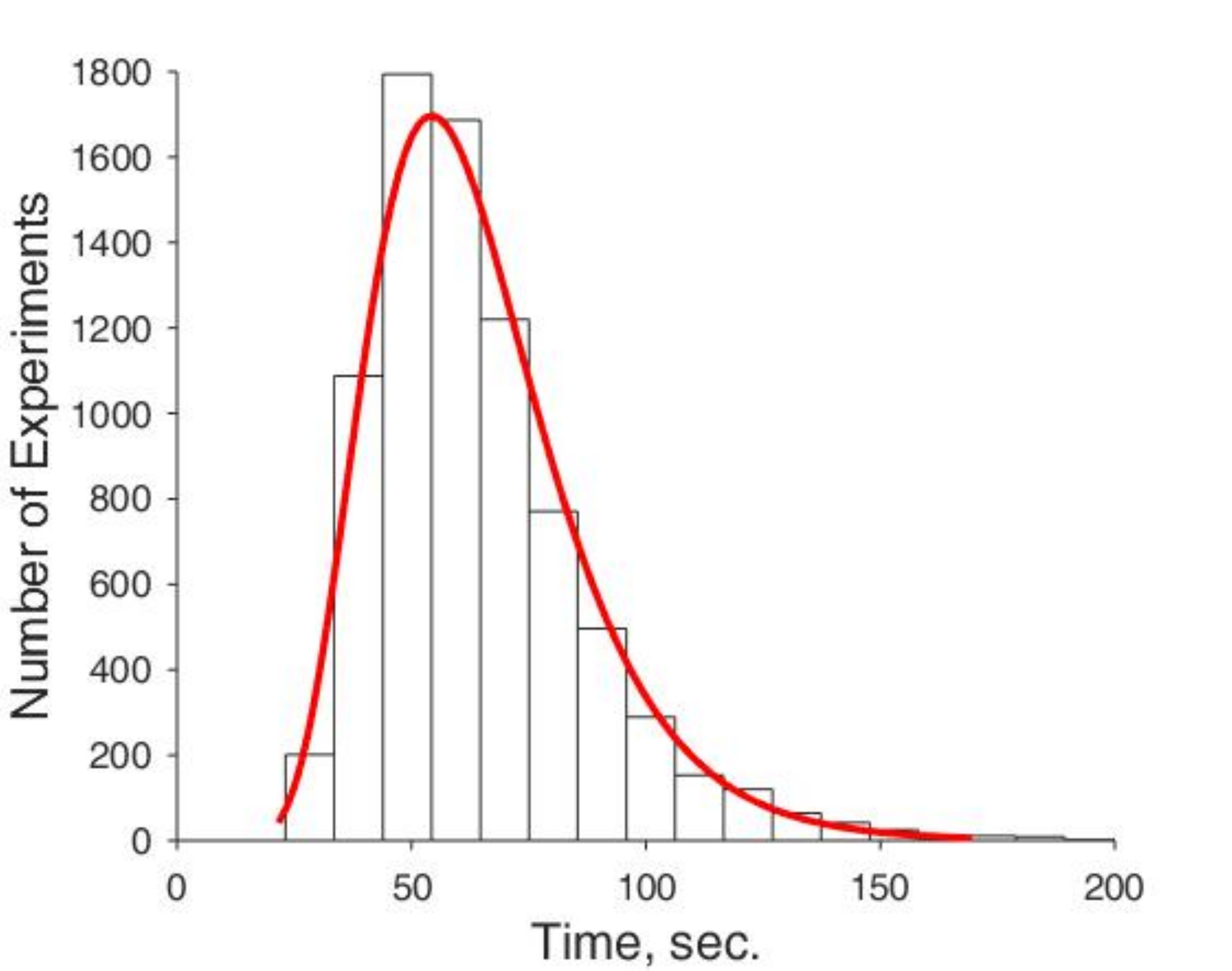}
	\includegraphics[width=.328\textwidth]{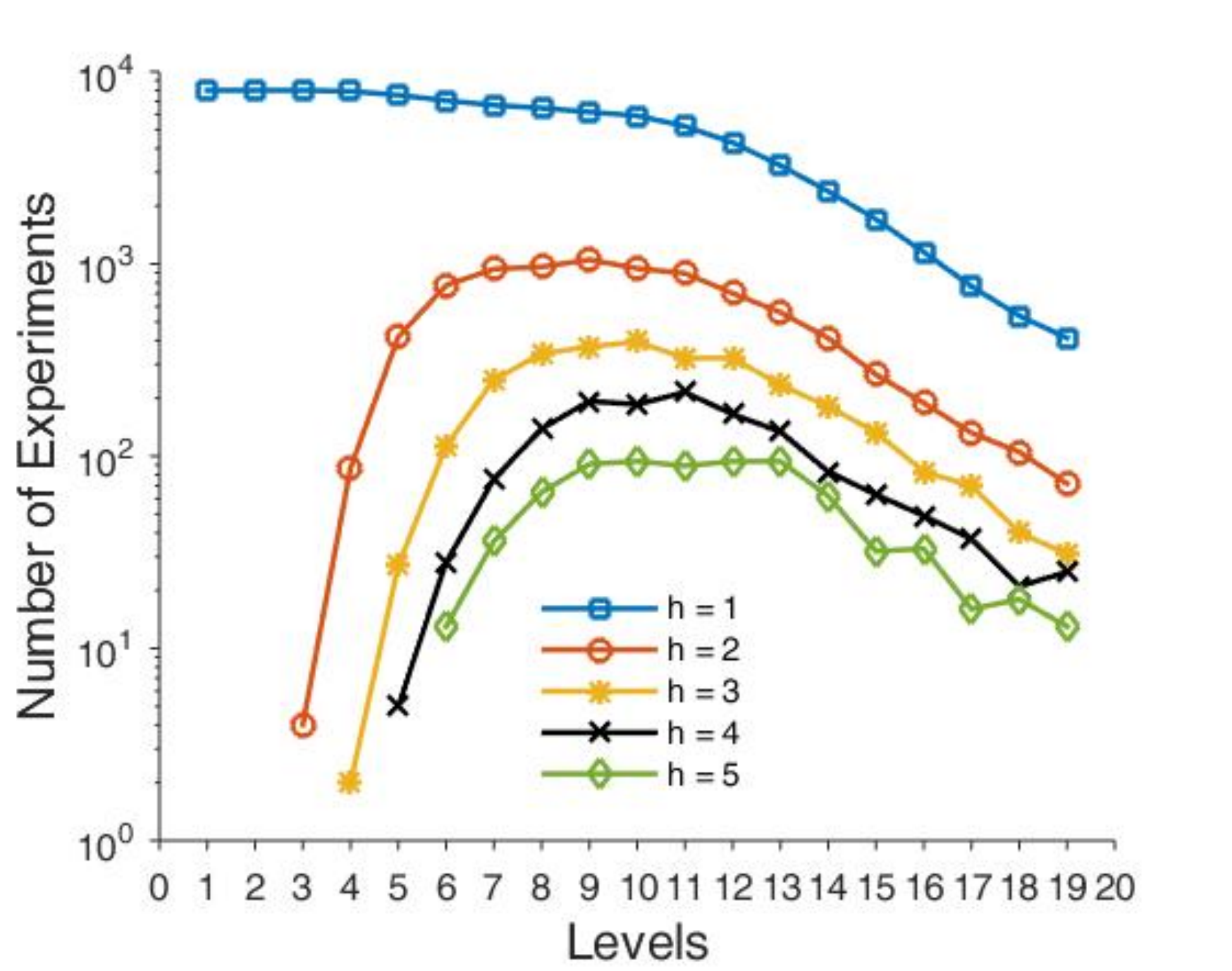}
	\includegraphics[width=.328\textwidth]{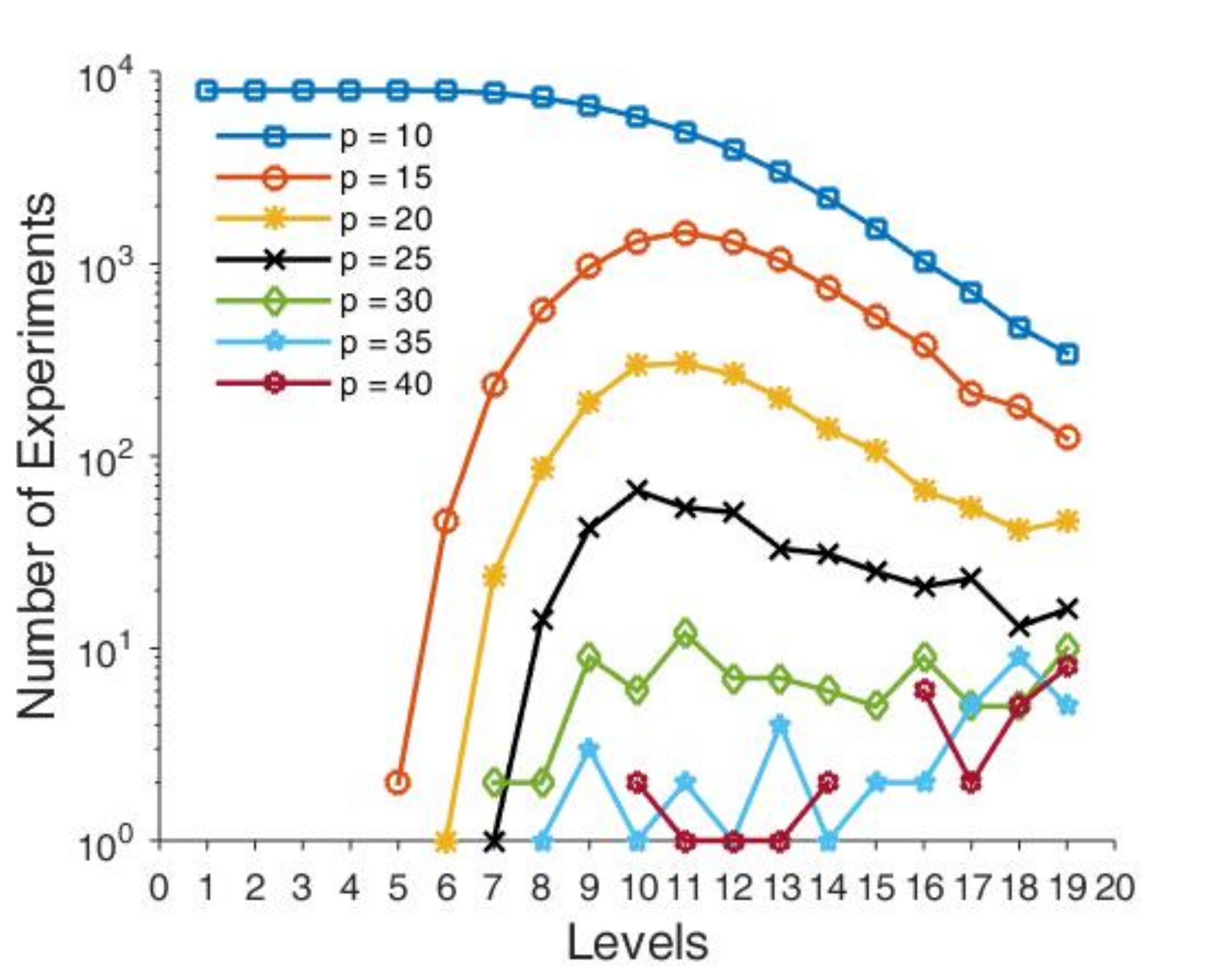}
	\caption{Left: Distribution of execution times for \majExp$\:$runs. Middle: Statistics of increasing RPH $h$. 
Right: Particles of PSO ${p}$ for \majExp$\:$experiments}\label{fig:aheads_pso}
\end{figure}

After achieving such a high success rate of ARES for an arbitrary initial configuration, we would like to demonstrate that the number of experiments performed is sufficient for high confidence in our results. This requires us to determine the appropriate number $N$ of random variables $Z_1, ... Z_N$ necessary for the Monte-Carlo approximation scheme we apply to assess efficiency of our approach. For this purpose, we use the additive approximation algorithm as discussed in~\cite{grosu2014isola}. If the sample mean $\mu_Z\,{=}\,(Z_1\,{+}\,{\ldots}\,{+}\,Z_N)/N$ is expected to be large, then one can exploit the Bernstein's inequality and fix $N$ to $\Upsilon\,{\propto}\,ln(1/\delta)/\varepsilon^2$. This results in an {\em additive} or {\em absolute-error $(\varepsilon,\delta)$-approximation scheme}:
\[
{\bf P}[\mu_Z\,{-}\,\varepsilon\leq\widetilde{\mu}_Z\leq\mu_Z\,{+}\,\varepsilon)]\geq{}1-\delta,
\]
where $\widetilde{\mu}_Z$ approximates $\mu_Z$ with absolute error $\varepsilon$ and probability $1-\delta$.

In particular, we are interested in $Z$ being a Bernoulli random variable:
 
 \[Z=\left\{
\begin{array}{ll}
1, & \text{if}\:\:J(\boldsymbol{c}(t),\va(t),{h}(t))\leqslant\varphi,\\
0, & \text{otherwise}.
\end{array}\right.\]

Therefore, we can use the Chernoff-Hoeffding instantiation of the Bernstein's inequality, and further fix the proportionality constant to $\Upsilon\,{=}\,4\,ln(2/\delta)/\varepsilon^2$, as in~\cite{HLMP04}. 

Hence, for our performed \majExp$\:$experiments, we achieve a success rate of 95\% with absolute error of $\varepsilon = 0.05$ and confidence ratio 0.99.

Moreover, considering that the average length of a plan is 13, and that each state in a plan is independent from all other plans, we can roughly consider that our above estimation generated 80,000 independent states. For the same confidence ratio of 0.99 we then obtain an approximation error $\varepsilon\,{=}\,0.016$, and for a confidence ratio of 0.999, we obtain an approximation error $\varepsilon\,{=}\,0.019$.

\section{Adaptive-Neighborhood Distributed Control}
\label{sec:dampc}

In Section~\ref{sec:ares}, we introduced the concept of Adaptive-Horizon MPC ($\ampc$). $\ampc$ gives controllers extraordinary power: we proved that under certain controllability conditions, an $\ampc$ controller can attain V-formation with probability~1. 
We now present $\dampc$~\cite{sac19}, a distributed version of $\ampc$ that extends $\ampc$ along several dimensions. First, at every time step, $\dampc$ runs a \emph{distributed consensus algorithm} to determine the optimal action (acceleration) for every agent in the flock. In particular, each agent $i$ starts by computing the optimal actions for its local subflock. The subflocks then communicate in a sequence of consensus rounds to determine the optimal actions for the entire flock.

Secondly, $\dampc$ features \emph{adaptive neighborhood resizing} in an effort to further improve the algorithm's efficiency. Like with an adaptive prediction horizon in $\ampc$, neighborhood resizing utilizes the implicit Lyapunov function to guarantee eventual convergence to a minimum neighborhood size. $\dampc$ thus treats the neighborhood size as another controllable variable that can be dynamically adjusted for efficiency purposes. This leads to reduced communication and computation compared to the centralized solution, without sacrificing statistical guarantees of convergence such as those offered by its centralized counterpart $\ampc$. Statistical global convergence can be proven.

\subsection{DAMPC System Model}
\label{subsec:dist_flock}
We consider a distributed setting with the following assumptions about the system model. 
\begin{enumerate}
\item Birds can communicate with each other without delays. As explained below, each bird $i$ adaptively changes its communication radius.
The measure of the radius is the number of birds covered, and we refer to it as bird $i$'s local neighborhood $N_i$, including bird $i$ itself.
\item All birds use the same algorithm to satisfy their local reachability goals, i.e., to bring the local cost $J\left(\vs_{N_i}\right)$, $i\,{\in}\,\{1,\ldots,B\}$, below the given threshold $\varphi$.
\item 
Birds move in continuous space and change accelerations synchronously at discrete time points.
\item After executing its local algorithms, each bird broadcasts the obtained solution to its neighbors. In this manner, every bird receives solution proposals, which differ due to the fact that each bird has its own local neighborhood. To achieve consensus, each bird takes as its best action the one with the minimal cost among the received proposals. The solutions for the birds in the considered neighborhood are then fixed. The consensus rounds repeat until all birds have fixed solutions.
\item At every time step, the value of the global cost function $J(\vs)$ is received by all birds in the flock and checked for improvement. The neighborhood for each bird is then resized based on this global check.
%
%
\item The upwash benefit $UB_i$ for bird $i$ defined in Section~\ref{subsec:dynmodel} maintains connectivity of the flock along the computations, while our algorithm manages collision avoidance.
\end{enumerate}

\subsection{The Distributed AMPC Algorithm}
\label{sec:main_algo}
In this section, we solve a stochastic reachability problem in the context of V-formation control, and demonstrate that the algorithm can be used as an alternative hill-climbing, cost-based optimization technique avoiding local minima. 

\begin{table}[htbp]\caption{Table of Notation}
\begin{center}
\small{
\begin{tabular}{r c p{.7\linewidth} }
\toprule
$H,\:h_i$ & $\triangleq$ & Maximum and current local horizon lengths\\
$N_i$ & $\triangleq$ & neighborhood of the i's bird\\
$k$ & $\triangleq$ & the number of birds in the neighborhood ($|N_i|$)\\
$m$ & $\triangleq$ & number of time-steps allowed by the property $\varphi$\\
$\va^{1\,{:}\,m}$ & $\triangleq$ & sequence of synthesized acceleration for all birds for each time-step\\
$?$ & $\triangleq$ & acceleration that has not yet been fixed\\
$1$, $!$ & $\triangleq$ & superscript for the first and last, respectively, elements in the horizon sequence\\
$\va^{1\,{:}\,!}_{N_i}$, $\vs^{1\,{:}\,!}_{N_i}$ & $\triangleq$ & sequence of accelerations and corresponding states of the horizon length reached at time-step $t$ by bird $i$ locally in its neighborhood $N_i$\\
$\Delta_i$ & $\triangleq$ & dynamical threshold defined based on the last achieved local cost $J\left(\vs^{!}_{N_j}\right)$ in the neighborhood $N_j$\\
$\va^{!},\:\vs^{!}$ & $\triangleq$ & accelerations and corresponding states for all birds achieved globally as unions of the last elements in the best horizon sequences reached locally in each neighborhood\\
$\va^{1}(t),\:\vs^{1}$ & $\triangleq$ & accelerations and states for all birds achieved globally as unions of the first elements in the best horizon sequences reached locally in each neighborhood\\
$\ell_t$ & $=$ & $J\left(\vs^{!}\right)$ -- level achieved globally at time-step $t$ after applying $\va^{1\,{:}\,!}$ to the current state\\ 
$\Delta$ & $\triangleq$ & dynamical threshold defined based on the last achieved global level\\
\bottomrule
\end{tabular}
}
\end{center}
\label{tab:notations}
\end{table}

$\dampc$ (see Alg.~\ref{alg:distributed}) takes as input an MDP $\M$, a threshold $\varphi$ defining the goal states $G$, the maximum horizon length $h_{max}$, the maximum number of time steps $m$, the number of birds $B$, and a scaling factor $\beta$. It outputs a state $\vs_0$ in $I$ and a sequence of actions $\va^{1\,{:}\,m}$ taking $\M$ from $\vs_0$ to a state in $G$.
\begin{algorithm}[ht!]
\small
	\SetKwFunction{lampc}{LocalAMPC}  
    \SetKwFunction{cost}{cost}
    \SetKwFunction{J}{J}
    \SetKwFunction{fixed}{Fixed}
    \SetKwFunction{fix}{Fix}
    \SetKwFunction{neigh}{Neighbors}
    \SetKwFunction{neighsize}{NeighSize}
    
	\SetKwInOut{Input}{Input}
	\SetKwInOut{Output}{Output}
    \SetKwFor{ParFor}{for}{do in parallel}{end} 
    \DontPrintSemicolon
	\Input{$\M\,{=}\,\left(S,A,T,J,I\right),\varphi,{h}_{\mathit{max}},m,B, \beta$}
	\Output{$\vs_0$, $\va^{1\,{:}\,m}\,{=}\,[\va(t)]_{1\leqslant t\leqslant\,m}$} 
	\BlankLine
	$\vs_0\leftarrow\mathrm{sample}(I)$; $\vs\leftarrow\vs_0$; 
    $\ell_0\leftarrow J(\vs)$; $t\leftarrow 1$; $k\leftarrow B$; $H\leftarrow h_{max}$; 
	\BlankLine
	\While{($\ell_{t{-}1} > \varphi)$ $\land$ $(t < m)$}
	{
      $\forall i:\:\va_i^{1\,{:}\,!}(t)\leftarrow\:?$; \tcp*{No bird has a fixed solution yet}
      \While 
      {\textcolor{blue!50!green}{$\left(R\leftarrow\left\{j \,|\, \va_j(t)\,{=}\,?\right\}\right)\neq\emptyset$}}
      {
       
     	\ParFor{ $i \in R$ } 
        {
          $N_i \leftarrow \neigh(i, k)$;  \:\tcp*{$k$ neighbors of $i$}
          $\Delta_i \leftarrow J\left(\vs_{N_i}^!\right)/(m{-}t)$;\;
          $\left(\vs_{N_i}^{1\,{:}\,!},\va_{N_i}^{1\,{:}\,!}\right)\leftarrow\lampc\left(\M,\vs_{N_i}^{1\,{:}\,!},\va_{N_i}^{1\,{:}\,!},\Delta_i,H,\beta\right)$;
        }  
        \textcolor{blue!50!green}{$i^* \leftarrow \argmin_{j\in R}{J\left(\vs_{N_j}^!\right)}$;\: \tcp*{Best solution in R}}
        \: \tcp{Fix $i^*$'s neighbors solutions}
        \For{$i\in\neigh(i^*,k)$}   
        {
          $\va_i^{1\,{:}\,!}(t)\leftarrow\va_{N_{i^*}}^{1\,{:}\,!}[i]$; \tcp*{The solution for bird $i$}
        }
      }
      \tcp{First action and next state}
      $\va(t)\leftarrow\va^1(t)$; $\vs^1\leftarrow\bigcup_i\vs_{N_i}^1$; $\vs^!\leftarrow\bigcup_i\vs_{N_i}^!$; $\vs\leftarrow\vs^1$;
      
      \If {\textcolor{blue!50!green}{$\ell_{t{-}1} - J\left(\vs^!\right) >\Delta$}} {
        $\ell_t\leftarrow J\left(\vs^!\right)$; $t\leftarrow t{+}1$;\tcp*{Proceed to the next level}}
      $k \leftarrow \neighsize\left(J\left(\vs^!\right), k\right)$; \tcp*{Adjust neighborhood size}
	}
	\caption{DAMPC}
	\label{alg:distributed}
\end{algorithm}
The initialization step (Line 1) chooses an initial state $\vs_0$ from $I$, fixes an initial level $\ell_0$ as the cost of $\vs_0$, sets the initial time $t$ and number of birds to process $k$. 
The outer while-loop (Lines 2-22) is active as long as $\M$ has not reached $G$ and time has not expired. In each time step, $\dampc$ first sets the sequences of accelerations $\va_i^{1\,{:}\,!}(t)$ for all $i$ to $?$ (not yet fixed), and then iterates lines~4-15 until all birds fix their accelerations through global consensus (Line 10). This happens as follows. First, all birds determine their neighborhood (\emph{subflock}) $N_i$ and the cost decrement $\Delta_i$ that will bring them to the next level (Lines 6-7). Second, they call \lampc (see Section~\ref{sec:lampc}), which takes sequences of states and actions fixed so far and extends them such that (line~8) the returned sequence of actions $\va_{N_i}^{1\,{:}\,!}$ and corresponding sequence of states $\vs_{N_i}^{1\,{:}\,!}$ decrease the cost of the subflock by $\Delta_i$. Here notation $1\,{:}\,!$ means the whole sequence including the last element $!$ (some number, the farthest point in the future where the state of the subflock is fixed), which can differ from one neighborhood to another depending on the length of used horizon. Note that an action sequence passed to $\lampc$ as input $\va_{N_i}^{1\,{:}\,!}$ contains $?$ and the goal is to fill in the gaps in solution sequence by means of this iterative process. In Line~10, we use the value of the cost function in the last resulting state $J\left(\vs^!_{N_j}\right)$ as a criterion for choosing the best action sequence proposed among neighbors $j\in R$. Then the acceleration sequences of all birds in this subflock are fixed (Lines~12-14).

After all accelerations sequences are fixed, that is, all $?$ are eliminated, the first accelerations in this sequence are selected for the output (Line~17). The next state $\vs^1$ is set to the union of $\vs_{N_i}^1$ for all neighbors $i=1\,{:}\,B$, the state of the flock after executing $\va(t)$ is set to the union of $\vs_{N_i}^!$. If we found a path that eventually decreases the cost by $\Delta$, we reached the next level, and advance time (Lines~18-20).  In that case, we optionally decrease the neighborhood, and increase it otherwise (Line~21).

The algorithm is distributed and with a dynamically changing topology. Lines~4, 10, and~18 require synchronization, which can be achieved by broadcasting corresponding information to a central hub of the network. This can be a different bird or a different base station at each time-step.

\subsection{The Local AMPC Algorithm}
\label{sec:lampc}
$\lampc$ is a modified version of the AMPC algorithm~\cite{tiwari17}, as shown in Alg.~\ref{alg:local_ampc}. 
Its input is an MDP $\M$, the current state $\vs_{N_i}^{1\,{:}\,!}$ of a subflock $N_i$, a vector of acceleration sequences $\va_{N_i}^{1\,{:}\,!}$, one sequence for each bird in the subflock, a cost decrement $\Delta_i$ to be achieved, a maximum horizon $H$ and a scaling factor $\beta$. In $\va_{N_i}^{1\,{:}\,!}$ some accelerations may not be fixed yet, that is, they have value $?$.

Its output is a vector of acceleration sequences $\va_{N_i}^{1\,{:}\,!}$, one for each bird, that decreased the cost of the flock at most, the state $\vs_{N_i}^{1\,{:}\,!}$ of the subflock after executing all actions.
\begin{algorithm}[ht!]
\small
	\SetKwFunction{PSO}{PSO}
    \SetKwFunction{cost}{cost}
	\SetKwInOut{Input}{Input}
	\SetKwInOut{Output}{Output}
    \DontPrintSemicolon
	\Input{$\M\,{=}\,(S,A,T,J,I)$, $\vs_{N_i}^{1\,{:}\,!}$, $\va_{N_i}^{1\,{:}\,!}$, $\Delta_i$, $H$, $\beta$}
	\Output{$\vs_{N_i}^{1\,{:}\,!}$, $\va_{N_i}^{1\,{:}\,!}$}
    ${p}\leftarrow{2}\cdot\beta\cdot{B}$; \tcp*{Initial swarm size}
    
    $h_i\leftarrow 1$; \tcp*{Initial horizon $\forall j\in N_i:$ $\va_{j}^1\,{=}\,?$}
	\Repeat{$\left(J\left(t\vs_{N_i}^{!}\right)-\ell_{t-1} < \Delta_i\right) \land (h_i \leqslant H)$}
	{
       \tcp{Run PSO with local information $\vs_{N_i}^{1\,{:}\,!}$ and $\va_{N_i}^{1\,{:}\,!}$}  
       $\left(t\vs_{N_i}^{1\,{:}\,!},t\va_{N_i}^{1\,{:}\,!}\right)\leftarrow$ \PSO{$\M,\vs_{N_i}^{1\,{:}\,!},\va_{N_i}^{1\,{:}\,!},p,h_i$};\\
       ${h_i}\leftarrow{h_i}+1$; 
       $p\leftarrow{2}\cdot\beta\cdot{h_i}\cdot{B}$; \tcp*{increase horizon, swarm size}
    }
    $\vs_{N_i}^{1\,{:}\,!}\leftarrow t\vs_{N_i}^{1\,{:}\,!}$; $\va_{N_i}^{1\,{:}\,!}\leftarrow t\va_{N_i}^{1\,{:}\,!}$; \tcp*{Return temporary sequences}
	\caption{LocalAMPC}
	\label{alg:local_ampc}
\end{algorithm}
$\lampc$ first initializes (Line 1) the number of particles $p$ to be used by PSO, proportionally to the input horizon $h_i$, to the number of birds $B$, and the scaling factor $\beta$. It then tries to decrement the cost of the subflock by at least $\Delta_i$, as long as the maximum horizon $H$ is not reached (Lines 3-7). 

For this purpose it calls PSO (Line 5) with an increasingly longer horizon, and an increasingly larger number of particles. The idea is that the flock might have to first overcome a cost bump, before it gets to a state where the cost decreases by at least $\Delta_i$. PSO extends the input sequences of fixed actions to the desired horizon with new actions that are most successful in decreasing the cost of the flock, and it computes from scratch the sequence of actions, for the $?$ entries. The result is returned in $\va_{N_i}^{1\,{:}\,!}$. PSO also returns the states $\vs_{N_i}^{1\,{:}\,!}$ of the flock after applying the whole sequence of actions. Using this information, it computes the actual cost achieved. 

\subsection{Dynamic Neighborhood Resizing}
\label{sec:resizing}

The key feature of $\dampc$ is that it \emph{adaptively resizes neighborhoods}. This is based on the following observation: \emph{as the agents are gradually converging towards a global optimal state, they can explore smaller neighborhoods} when computing actions that will improve upon the current configuration. 


Adaptation works on lookahead cost, which is the cost that is reachable in some future time. Line~19 of $\dampc$ is reached (and the level $\ell_t$ is incremented) whenever we are able to decrease this look-ahead cost. If level $\ell_t$ is incremented, neighborhood size $k\in [k_{min},k_{max}]$ is decremented, and incremented otherwise, as follows: $\neighsize(J, k) =$ 
\begin{align}
\begin{split}
\begin{cases}
   \min\left(\max\left(k-\left\lceil\left(1-\frac{J(s^!)}{k}\right) \right\rceil,k_{min}\right), k_{max}\right), & \text{the next level} \\  
   \min\left(k+1, k_{max}\right), & \text{otherwise}.
  \end{cases}
\end{split}\label{eq:size}
\end{align}
In Fig.~\ref{fig:example} we depict a simulation-trace example, demonstrating how levels and neighborhood size are adapting to the current value of the cost function.

\begin{figure}[t]
\centering	
\includegraphics[width=.49\linewidth]{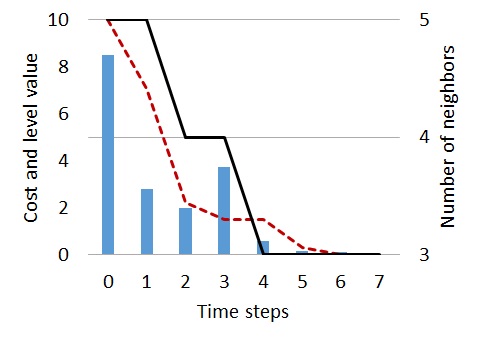}
\includegraphics[width=.49\linewidth]{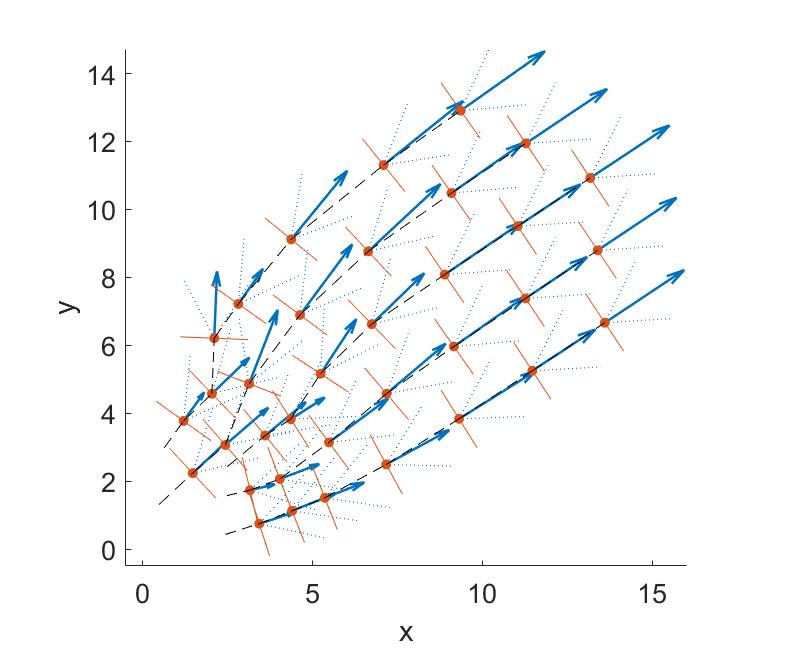}
\caption{Left: Blue bars are the values of the cost function in every time step. Red dashed line in the value of the Lyapunov function serving as a threshold for the algorithm. Black solid line is resizing of the neighborhood for the next step given the current cost. Right: Step-by-step evolution of the flock from an arbitrary initial configuration in the left lower corner towards a V-formation in the right upper corner of the plot.}
\label{fig:example}
\end{figure}

\subsection{Local Convergence}
\begin{lemma}[Local convergence]
\label{lem:local_ampc}
Given $\M=(S,A,T,J,I)$, an MDP with cost function $\texttt{cost}$, and a nonempty set of target states $G\subset S$ with $G=\{\vs\,|\,J(\vs)\leqslant\varphi\}$. If the transition relation $T$ is controllable with actions in $A$ for every (local) subset of agents, then there exists a finite (maximum) horizon $h_{max}$ such that \lampc is able to find the best actions $\va_{N_i}^{1\,{:}\,!}$ that decreases the cost of a neighborhood of agents in the states $\vs_{N_i}^{1\,{:}\,!}$ by at least a given $\Delta$.

\end{lemma}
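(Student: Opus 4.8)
The plan is to split the claim into two logically distinct pieces and then combine them with the loop structure of \lampc: an \emph{existence} result that rests entirely on the controllability hypothesis, and a \emph{recoverability} result showing that the randomized search inside \lampc actually locates the witnessing action sequence. First I would fix a neighborhood $N_i$ in its current state $\vs_{N_i}^{1\,{:}\,!}$ and regard the local subflock as a stand-alone control system whose admissible inputs are the acceleration sequences of $A$ restricted to $N_i$ (subject to $\|\va\|\leqslant\rho\|\vv\|$ and $\|\vv\|\leqslant\vv_{max}$). The target here is deliberately modest: we do not need to reach $G$ in a single call, only to lower the local cost by the prescribed decrement $\Delta$.

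\textbf{Existence of a finite horizon.} Since $T$ is controllable with actions in $A$ for every local subset of agents, from $\vs_{N_i}^{1\,{:}\,!}$ one can steer the subflock to any reachable local state using admissible actions in finitely many steps. Because $G$ is nonempty and a $\Delta$-decrease is strictly weaker than reaching a cost-$\varphi$ state (with $\Delta$ chosen, as in \dampc, proportional to the remaining cost budget and hence strictly above the achievable minimum), there is a finite $h^\star$ and an admissible sequence of length $h^\star$ whose application produces a successor state with cost lowered by at least $\Delta$. I would then upgrade finiteness to a \emph{uniform} bound: the cost-relevant local configurations reachable in a bounded number of steps stay in a compact region (velocities are capped by $\vv_{max}$, so relative displacements over a bounded horizon are bounded), and $J$ is continuous on it; hence $h^\star$ can be bounded above by a single finite $h_{max}$ independent of the particular state. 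This is the horizon $H=h_{max}$ that appears as the input to \lampc.

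\textbf{Recoverability by PSO.} For each fixed horizon $h_i$, \lampc invokes PSO over the space of length-$h_i$ acceleration sequences. I would lean on two facts: (i) smoothness and continuity of $J$ (the property stressed when the fitness metrics were designed) guarantee that the set of action sequences achieving the $\Delta$-decrease at horizon $h^\star$ has nonempty interior, hence positive measure; and (ii) PSO, as a randomized global optimizer whose swarm size $p=2\beta h_i B$ grows with $h_i$, drives its best-found cost to the true minimum with probability one as sampling increases. Consequently, once $h_i$ reaches $h^\star\leqslant h_{max}$, PSO samples the good set with positive probability and therefore returns, with probability one over repeated trials, a sequence $\va_{N_i}^{1\,{:}\,!}$ satisfying the cost-decrement stopping test of \lampc relative to $\ell_{t-1}$.

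\textbf{Termination and the main obstacle.} The \textbf{Repeat}-loop of \lampc increments $h_i$ starting from $1$ and exits either when the $\Delta$-decrease is met or when $h_i$ exceeds $H$; since a witnessing $h^\star\leqslant h_{max}$ exists and PSO recovers its solution there, the loop halts with a valid output in at most $h_{max}$ rounds. I expect the genuine difficulty to be the bridge from the purely existential controllability statement to the stochastic behavior of PSO: controllability only certifies that a suitable action sequence \emph{exists}, whereas \lampc must \emph{find} it by random sampling. The phrase ``is able to find'' must accordingly be read in the probability-one (or high-probability) sense, and the argument turns precisely on the positive-measure interior of the good set — which is why continuity of $J$ is indispensable — together with the compactness argument that keeps $h_{max}$ finite and state-independent rather than letting it blow up from state to state.
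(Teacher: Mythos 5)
There is a genuine gap, and it is exactly the point the paper's own (very short) proof is about. Your argument treats the subflock as a stand-alone system whose admissible inputs are \emph{all} acceleration sequences for the birds of $N_i$, and derives existence of a witnessing horizon $h^\star$ from local controllability of that free system. But \lampc{} is invoked inside the consensus rounds of \dampc{} with an input $\va_{N_i}^{1\,{:}\,!}$ in which some birds' accelerations are \emph{already fixed} (committed earlier by an overlapping neighborhood $N_{i^*}$); PSO may only fill in the $?$ entries. Within the prefix horizon covered by those frozen entries the local MDP is \emph{not} fully controllable, so your existence step fails as stated: the cost-decreasing sequence whose existence controllability would guarantee may require changing accelerations that \lampc{} is forbidden to touch. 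The paper's proof resolves precisely this: it observes that \emph{beyond} the horizon of the fixed accelerations PSO regains free choice of actions, so controllability holds again for the suffix, and the conclusion then follows by applying the centralized AMPC convergence theorem (Theorem~1 of~\cite{tiwari17}, Theorem~\ref{thm:ampc} in this paper) over that extended horizon. Your proposal never mentions the partially fixed input, yet this is the only feature that distinguishes \lampc{} from plain AMPC and the only real obstacle the lemma must overcome.

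Two secondary issues are worth flagging. First, your ``recoverability by PSO'' step asserts that PSO drives its best-found cost to the true minimum with probability one as the swarm grows; in the paper this is not proved but \emph{assumed} (the ``fairness assumption about the PSO algorithm'' invoked in the proof of Theorem~\ref{thm:ampc}), so it should be stated as a hypothesis rather than derived. Second, ``nonempty interior, hence positive measure'' of the good set needs strictness: continuity of $J$ plus the existence of a sequence achieving a decrease \emph{strictly} greater than $\Delta$ gives an open neighborhood of witnesses; a decrease of exactly $\Delta$ does not. Both of these are repairable; the missing treatment of the fixed accelerations is the substantive gap.
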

\begin{proof}
In the input to \lampc,
the accelerations of some birds in $N_i$ may be fixed (for some horizon). As a consequence, the MDP $\M$ may not be fully controllable within this horizon. Beyond this horizon, however, PSO is allowed to freely choose the accelerations, that is, the MDP $\M$ is fully controllable again. The result now follows from convergence of AMPC (Theorem~1 from~\cite{tiwari17}).
\end{proof}

\subsection{Global Convergence and Stability}

Global convergence is achieved by our algorithm, where we overcome a local minimum by gradually adapting the neighborhood size to proceed to the next level defined by the Lyapunov function. 
\label{sec:convergence}
Since we are solving a nonlinear nonconvex optimization problem, the cost $J$ itself may not decrease monotonically. However, the look-ahead cost -- the cost of some future reachable state -- monotonically decreases. These costs are stored in level variables $\ell_t$ in Algorithm~$\dampc$ and they define a Lyapunov function $V$.
\begin{eqnarray}
V(t) = \ell_t \quad \mbox{ for levels $t = 0,1,2,\ldots$} \label{func:lyap}
\end{eqnarray}
where the levels decrease by at least a minimum dynamically defined threshold: $V(t\,{+}\,1)<V(t)-\Delta$.

\begin{lemma}
$V(t): \mathbb{Z}\rightarrow\mathbb{R}$ defined by~(\ref{func:lyap}) is a valid Lyapunov function, i.e., it is positive-definite and monotonically decreases until the system reaches its goal state. 
\end{lemma}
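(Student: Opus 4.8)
The plan is to verify the two defining properties of a Lyapunov function separately: (i)~\emph{positive-definiteness}, which follows from the algebraic structure of the cost function $J$, and (ii)~\emph{monotonic decrease}, which follows almost by construction from the level-update rule of Algorithm~\ref{alg:distributed}.

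First I would establish positive-definiteness. By definition~(\ref{func:lyap}), $V(t) = \ell_t = J(\vs^!)$, and $J$ is the sum-of-squares cost of Eq.~\ref{eq:fitness}, namely $(\CV - \CV^*)^2 + (\VM - \VM^*)^2 + (\UB - \UB^*)^2$. Being a sum of squares, $J(\vs) \geqslant 0$ for every state $\vs$, so $V(t) \geqslant 0$ for all $t$; moreover $V(t) = 0$ exactly when all three metrics simultaneously attain their optimal values, i.e., at a perfect V-formation, and $V(t) \leqslant \varphi$ precisely on the goal set $G$. Hence $V$ is bounded below by $0$ and attains its minimum on the target states, as required.

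Next I would establish the strict decrease. Inspecting the update in Lines~18--20 of Algorithm~\ref{alg:distributed}, the level variable is overwritten, $\ell_t \leftarrow J(\vs^!)$, and the index $t$ advanced \emph{only} when the guard $\ell_{t-1} - J(\vs^!) > \Delta$ holds. Consequently, whenever a new level is recorded we have $\ell_t = J(\vs^!) < \ell_{t-1} - \Delta$, that is $V(t) < V(t-1) - \Delta$. Thus along the sequence of levels $t = 0,1,2,\ldots$ the value of $V$ strictly decreases, and by at least the dynamically defined threshold $\Delta > 0$ at each step. Crucially, the Lyapunov function is indexed by levels and not by wall-clock time steps, so the nonmonotonic fluctuations of the raw cost and of the neighborhood size $k$ are irrelevant here: $V$ is read off only at those instants where genuine look-ahead progress has been certified.

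The remaining point --- and the main obstacle --- is to argue that $V$ keeps decreasing \emph{until} the goal is reached, i.e., that while $\ell_{t-1} > \varphi$ the level-advancing guard is eventually satisfied rather than the algorithm stalling. I would handle this by coupling the two adaptive mechanisms. If no local subflock can decrease its cost by $\Delta_i$ at the current neighborhood size, the resizing rule~(\ref{eq:size}) increments $k$ (up to $k_{max}$), so in the worst case the neighborhood grows to the whole flock and the problem reduces to the centralized case. There Lemma~\ref{lem:local_ampc} applies: under the stated controllability assumption, $\lampc$ finds, for a sufficiently large horizon $h \leqslant h_{max}$, actions that decrease the cost by the required $\Delta$. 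Hence the guard is eventually triggered, a new (strictly smaller) level is recorded, and $V$ advances. I expect this step to be the delicate one, since it requires showing that the interplay of fluctuating neighborhood sizes and adaptive horizons never prevents progress away from the goal --- the genuinely nontrivial content that the positive-definiteness and per-level decrease claims, being essentially structural, do not capture.
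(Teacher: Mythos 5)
Your proof is correct, and its first two parts coincide with the paper's own argument: the paper's proof consists exactly of (i) noting that $V$ is nonnegative because $\ell_t$ equals $J(\vs)$ for some state $\vs$ and $J$ is positive by definition, and (ii) observing that the guard in Line~18 of Algorithm~\ref{alg:distributed} ensures each newly recorded level drops by more than $\Delta$. Your third paragraph, however, goes beyond what the paper proves under this lemma. The paper deliberately leaves the non-stalling question --- that the level-advancing guard is \emph{eventually} satisfied whenever $\ell_{t-1} > \varphi$ --- out of this lemma and handles it separately in Theorem~\ref{th:global_convergence} (global convergence), whose proof uses precisely the coupling you sketch: if progress stalls, the resizing rule~(\ref{eq:size}) grows the neighborhood until it reaches $B$, at which point the problem reduces to centralized AMPC (cf.\ the max-neighborhood lemma) and Lemma~\ref{lem:local_ampc} together with the controllability assumption guarantees the required decrease. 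So your proposal is not wrong anywhere; it simply merges the lemma with a sketch of the subsequent theorem. The paper's decomposition buys modularity --- the lemma is a purely structural statement about the recorded level sequence, needing no controllability hypothesis --- whereas your version makes the lemma self-contained but smuggles in assumptions (controllability, sufficiency of $h_{max}$) that the paper's lemma statement does not carry.
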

\begin{proof}
Note that the cost function $J(\vs)$ is positive by definition, and since $\ell_t$ equals $J(\vs)$ for some state $\vs$, $V$ is nonnegative. Line~18 of Algorithm~$\dampc$ guarantees that $V$ is monotonically decreasing by at least $\Delta$.

\end{proof}

\begin{lemma}[Global consensus]
Given Assumptions 1-7 in Section~\ref{subsec:dist_flock}, all agents in the system will fix their actions in a finite number of consensus rounds.
\end{lemma}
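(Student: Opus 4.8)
The plan is to prove that the inner consensus loop (Lines~4--15 of Algorithm~\ref{alg:distributed}) terminates by exhibiting a strictly decreasing, bounded-below integer variant on its iterations. The natural variant is $|R|$, the number of birds whose acceleration sequence is still unfixed, i.e.\ $R = \{j \mid \va_j(t) = {?}\}$. Since $R$ ranges over the $B$ birds of the flock, $|R| \leqslant B$ is finite, so it suffices to show that each consensus round (i)~is well-defined, (ii)~terminates, and (iii)~strictly decreases $|R|$.

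First I would dispatch termination of an individual round. The only potentially unbounded computation inside a round is the parallel family of calls to $\lampc$. By Lemma~\ref{lem:local_ampc} (Local convergence), which applies under the controllability premise carried by the assumptions of Section~\ref{subsec:dist_flock}, each such call returns after finitely many horizon increments: its internal \textbf{repeat}-loop runs only while $h_i \leqslant H = h_{max}$, and $h_{max}$ is finite. The selection $i^* = \argmin_{j\in R} J(\vs_{N_j}^!)$ is a minimization over the finite nonempty set $R$, hence well-defined, and the subsequent \textbf{for}-loop fixing the neighborhood $N_{i^*}$ touches at most $k \leqslant B$ birds. Thus each round completes in finite time.

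The crux is the strict decrease of $|R|$. By Assumption~1 every neighborhood $N_i$ contains bird $i$ itself, so $i^* \in N_{i^*}$. Since $i^* \in R$ its solution was still $?$ at the start of the round; after the Fix loop performs $\va_{i^*}^{1\,{:}\,!}(t) \leftarrow \va_{N_{i^*}}^{1\,{:}\,!}[i^*] \neq {?}$, bird $i^*$ leaves $R$. Hence $|R|$ drops by at least one per round (in fact by $|N_{i^*}\cap R|\geqslant 1$, as the whole subflock is fixed at once). Because $|R|$ is a non-negative integer, bounded above by $B$, that strictly decreases each round, the loop performs at most $B$ consensus rounds before $R=\emptyset$, at which point the guard fails and every agent has a fixed action, which is precisely the claim.

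The hard part is not the variant argument, which is routine, but justifying that each round's $\lampc$ calls return a \emph{usable} solution rather than merely halting --- this is exactly what Lemma~\ref{lem:local_ampc} supplies, provided its controllability hypothesis holds on each local subflock. A secondary point I would verify explicitly is that fixing the entire subflock $N_{i^*}$, which may overlap birds fixed in earlier rounds, can never reinsert a bird into $R$: within a single time step the reset $\va_i^{1\,{:}\,!}(t)\leftarrow{?}$ occurs only once (Line~3, outside the inner loop), and the Fix loop writes only concrete accelerations, so the set of fixed birds is monotonically non-decreasing across rounds and the variant is genuinely well-founded.
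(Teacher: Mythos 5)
Your proof is correct and takes essentially the same route as the paper's: both invoke Lemma~\ref{lem:local_ampc} to guarantee each round's \lampc{} calls succeed, observe that the entire subflock $N_{i^*}$ of the best proposer gets fixed, and conclude via monotone decrease of the unfixed set $R$. Your version is in fact more explicit than the paper's (which only asserts that $R$ is ``monotonically decreasing''), since you pin down the strict decrease via $i^*\in N_{i^*}\cap R$, bound the number of rounds by $B$, and verify that no bird is ever reinserted into $R$ within a time step.
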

\begin{proof}
During the first consensus round, each agent $i$ in the system runs \lampc for its own neighborhood $N_i$ of the current size $k$. Due to Lemma~\ref{lem:local_ampc}, $\exists \widehat{h}$ such that a solution, i.e. a set of action (acceleration) sequences of length $\widehat{h}$, will be found for all agents in the considered neighborhood $N_i$. Consequently, at the end of the round the solutions for at least all the agents in $N_{i^*}$, where $i^*$ is the agent which proposed the globally best solution, will be fixed. During the next rounds the procedure recurses. Hence, the set $R$ of all agents with \texttt{nfy} values is monotonically decreasing with every consensus round.
\end{proof}

Global consensus is reached by the system during communication rounds. However, to achieve the global optimization goal we prove that the consensus value converges to the desired property.
\begin{definition}
Let $\{\vs(t)\,{:}\,t=1,2,\ldots\}$ be a sequence of random vector-variables and $s^*$ be a random or non-random. Then $\vs(t)$ \textbf{converges with probability one} to $s^*$ if \[\mathbb{P}\left[\bigcup\limits_{\varepsilon>0}\bigcap\limits_{N<\infty}\bigcup\limits_{n\geqslant N}|\vs(t)-\vs^*|\geqslant\varepsilon\right]=0.\]
\end{definition}

\begin{lemma}[Max-neighborhood convergence] If $\dampc$ is run with constant neighborhood size $B$, then it behaves identically to centralized AMPC.
\end{lemma}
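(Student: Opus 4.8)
The plan is to show that instantiating $\dampc$ with the fixed neighborhood size $k = B$ collapses every distributed feature of Algorithm~\ref{alg:distributed} into the corresponding centralized step, so that its control flow coincides with that of $\ampc$ (the AMPC algorithm of~\cite{tiwari17}, invoked through $\lampc$) executed on the entire flock at each time step. The first observation I would record is that with $k = B$ the call $\mathrm{Neighbors}(i,k)$ returns the full index set $\{1,\ldots,B\}$ for every bird $i$. Consequently $N_i$ is the entire flock for all $i$, every subflock state $\vs_{N_i}^{1\,{:}\,!}$ equals the global state $\vs^{1\,{:}\,!}$, and the per-agent threshold $\Delta_i = J(\vs^!)/(m-t)$ is the same for every $i$ and agrees with the global decrement.

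Next I would analyze the inner consensus loop (Lines~4--15). Because $N_{i^*} = \{1,\ldots,B\}$, the fixing step (Lines~12--14) assigns a solution to \emph{every} bird during the very first iteration, so the residual set $R$ becomes empty and the loop exits after a single consensus round. Thus the sequence of consensus rounds, whose termination is guaranteed in general by the Global-consensus lemma, degenerates here to a single invocation of $\lampc$ on the full flock with all accelerations initially unfixed ($\va_i^{1\,{:}\,!}(t) = {?}$, set in Line~3). In particular no genuine inter-neighborhood communication occurs, and the $\argmin$ over proposals in Line~10 ranges over identical copies of one and the same global problem, so the choice of $i^*$ is immaterial.

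Finally I would argue that $\lampc$ restricted to the full flock is exactly $\ampc$. With every entry of $\va_{N_i}^{1\,{:}\,!}$ equal to ${?}$, Algorithm~\ref{alg:local_ampc} reduces to the AMPC loop: it sets the horizon to one and the swarm size proportionally to $B$, then repeatedly calls PSO with increasing horizon and particle count until the cost of the whole flock decreases by at least $\Delta_i$ or $H = h_{max}$ is reached, which is precisely the centralized adaptive-horizon synthesis of~\cite{tiwari17}. Since the resizing rule is pinned at $k = B$ by hypothesis, Line~21 is inert, and the outer loop together with the level updates $\ell_t$ and the Lyapunov function $V$ of~(\ref{func:lyap}) proceed step for step as in the centralized algorithm. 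Concatenating these facts shows that at each time step $\dampc$ solves the same global optimization and produces the same state/action/level trajectory as centralized AMPC.

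The step I expect to be the main obstacle is this last one: verifying that $\lampc$ and $\ampc$ coincide not merely in outer structure but also in their treatment of the partially fixed action sequences and in the exact form of the threshold $\Delta_i$ relative to the one used in~\cite{tiwari17}. I would dispatch it by noting that in the $k = B$ case there are no pre-fixed entries to reconcile, since all accelerations are ${?}$ in the single consensus round; hence the $\va^{1\,{:}\,!}$ bookkeeping carries no information beyond what $\ampc$ already maintains, and the two thresholds become identical once $N_i$ is the whole flock.
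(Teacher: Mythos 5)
Your proposal is correct and takes essentially the same approach as the paper: the paper's proof is a one-line observation that with neighborhood size $B$ the accelerations of all birds are fixed in the first consensus round, so $\dampc$ behaves like centralized AMPC. Your write-up simply fills in the supporting details (every $N_i$ is the full flock, the consensus loop exits after one round, $\lampc$ on the whole flock with all entries unfixed is exactly $\ampc$, and the resizing line is inert), all of which are consistent with the paper's argument.
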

\begin{proof}
If $\dampc$ uses neighborhood $B$, then it behaves like the centralized AMPC, because the accelerations of all birds are fixed in the first consensus round. 
\end{proof}

\begin{thm}[Global convergence]
Let $\M=(S,A,T,J,I)$ be an MDP with a positive and continuous cost function $J$ and a nonempty set of target states $G\,{\subset}\,S$, with $G\,{=}\,\{\vs\,|\,J(\vs)\,{\leqslant}\,\varphi\}$. If there exists a finite horizon $h_{max}$ and a finite number of execution steps $m$, such that centralized AMPC is able to find a sequence of actions $\{\va(t):\:t=1,\ldots,m\}$ that brings $\M$ from a state in $I$ to a state in $G$, then  $\dampc$ is also able to do so, with probability one. 

\label{th:global_convergence}
\end{thm}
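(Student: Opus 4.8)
The plan is to reduce global convergence of $\dampc$ to the assumed convergence of centralized AMPC, leaning on the three structural facts already established: the Lyapunov function $V$, the global-consensus lemma, and the max-neighborhood lemma. The guiding intuition is that $\dampc$ can never do worse than AMPC, because whenever the distributed, small-neighborhood computation stalls, the neighborhood grows back to the full flock, at which point $\dampc$ and AMPC coincide.

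First I would bound the number of levels. By the Lyapunov lemma, the function $V(t)\,{=}\,\ell_t$ of~(\ref{func:lyap}) is positive-definite and every accepted level strictly decreases it, $V(t{+}1)\,{<}\,V(t)\,{-}\,\Delta$ (line~18 of $\dampc$). Together with the dynamic threshold $\Delta_i\,{=}\,J(\vs_{N_i}^{!})/(m{-}t)$ from line~7 — the same schedule that drives the ARES optimality argument — this forces the level sequence from $\ell_0\,{=}\,J(\vs_0)$ below $\varphi$ in at most $m$ increments. Hence only finitely many successful levels are required, and it remains to show that each one is reached with probability one.

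Next I would fix attention on a single level and invoke the global-consensus lemma: for any neighborhood size $k$, the inner while-loop (lines~4--15) terminates after finitely many consensus rounds with every acceleration fixed, so a time step is always well-defined. The heart of the argument is then the neighborhood-resizing rule~(\ref{eq:size}): $k$ is decremented only when a level is reached and is incremented by one otherwise, staying in $[k_{min},k_{max}]$ with $k_{max}\,{=}\,B$. Consequently, if $\dampc$ stalls at some level, $k$ increases monotonically and attains $B$ after at most $B\,{-}\,k_{min}$ failed steps. At $k\,{=}\,B$ the single neighborhood is the entire flock, so the transition relation is fully controllable, Lemma~\ref{lem:local_ampc} applies, and by the max-neighborhood lemma $\dampc$ behaves identically to centralized AMPC. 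Since by hypothesis AMPC finds an action sequence taking $\M$ from $I$ to $G$ within horizon $h_{max}$ and $m$ steps — and the underlying randomized PSO step succeeds with probability one (Theorem~1 of~\cite{tiwari17}, as used in Lemma~\ref{lem:local_ampc}) — $\dampc$ inherits the same guarantee: with probability one it reaches a state in $G$ within $m$ steps.

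The main obstacle, as flagged in the abstract, is showing that fluctuations in the instantaneous cost $J$ and in the neighborhood size $k$ do not derail the monotone descent of $V$. Because the optimization landscape is nonconvex, $J$ itself need not decrease monotonically, so I must track the look-ahead level $\ell_t$ rather than $J$ throughout, and argue that a decrement of $k$ — which happens only after genuine progress has been banked by line~18 — can never cause $V$ to rise. The clean way to close this gap is to decouple the two mechanisms: monotone decrease of $V$ is enforced by line~18 regardless of $k$, while $k$ affects only how large a neighborhood each decrease needs, with the full-flock size $k\,{=}\,B$ serving as the guaranteed fallback that reduces the distributed problem to the already-convergent centralized one.
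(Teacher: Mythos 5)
Your proposal is correct and takes essentially the same route as the paper's own proof: monotone descent of the level-based Lyapunov function $V$, termination of each consensus phase, and the key fallback that repeated failures at a level grow the neighborhood until $k\,{=}\,B$, where $\dampc$ coincides with centralized AMPC (Theorem~1 of~\cite{tiwari17}) and the hypothesis guarantees the level is reached, so no indefinite switching between shrinking and growing phases can occur. The only difference is presentational: you invoke the global-consensus and max-neighborhood lemmas abstractly, where the paper re-derives the consensus-round mechanics (fixed accelerations of $N_{i^{*}}\cap N_j$, horizon extension up to $h_{max}$) inline; the substance is identical.
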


\begin{proof}
We illustrate the proof by our example of flocking. Note that the theorem is valid in the general formulation above for the fact that as global Lyapunov function approaches zero, the local dynamical thresholds will not allow neighborhood solutions to significantly diverge from reaching the state obtained as a result of repeated consensus rounds.
Owing to Lemma~\ref{lem:local_ampc}, after the first consensus round, Alg.~\ref{alg:local_ampc} finds a sequence of best accelerations of length $h_{i^{*}}$, for birds in subflock $N_{i^{*}}$, decreasing their cost by $\Delta_{i^{*}}$. In the next consensus round, birds $j$ outside $N_{i^{*}}$ have to adjust the accelerations for their subflock $N_j$, while keeping the accelerations of the neighbors in $N_{i^{*}}\,{\cap}\,N_j$ to the already fixed solutions. If bird $j$ fails to decrease the cost of its subflock $N_j$ with at least $\Delta_j$  within prediction horizon $h_{i^{*}}$, then it can explore a longer horizon $h_j$ up to $h_{max}$. This allows PSO to compute accelerations for the birds in $N_{i^{*}}\,{\cap}\,N_j$ in horizon interval $h_j\,{<}\,h\,{\leqslant}\,h_{i^{*}}$, decreasing the cost of $N_j$ by $\Delta_j$. Hence, the entire flock decreases its cost by $\Delta$ (this defines Lyapunov function $V$ in Eq.~\ref{func:lyap}) ensuring convergence to a global optimum. If $h_{max}$ is reached before the cost of the flock was decreased by $\Delta$, the size of the neighborhood will be increased by one, and eventually it would reach $B$. Consequently, using Theorem~1 in~\cite{tiwari17}, there exists a horizon $h_{\max}$ that ensures global convergence. For this choice of $h_{max}$ and for maximum neighborhood size, the cost is guaranteed to decrease by $\Delta$, and we are bound to proceed to the next level in $\dampc$. The Lyapunov function on levels guarantees that we have no indefinite switching between ``decreasing neighborhood size'' and ``increasing neighborhood size'' phases, and we converge (see Fig.~\ref{fig:example_local}).
\end{proof}

The result presented in~\cite{tiwari17} applied to our distributed approach, together with Theorem~\ref{th:global_convergence}, ensure the following corollary.
\begin{corollary}[Global stability]
Assume the set of target states $G\in S$ has been reached and one of the following perturbations of the system dynamics has been applied: a) the best next action is chosen with probability zero (crash failure); b) an agent is displaced (sensor noise); c) an action of a player with opposing objective is performed. Then applying Algorithm~\ref{alg:distributed} the system converges with probability one from a disturbed state to a state in $G$.
\end{corollary}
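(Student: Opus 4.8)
The plan is to reduce the corollary to Theorem~\ref{th:global_convergence} by observing that each of the three perturbations merely relocates the system to a fresh ``initial'' state, from which $\dampc$ can be restarted. The essential point is that the hypotheses of Theorem~\ref{th:global_convergence}---controllability of the transition relation $T$ together with the existence of a finite horizon $h_{max}$ and a finite step bound $m$---are structural properties of the MDP $\M$ that do not privilege the states in $I$; they hold uniformly across the reachable state space. Theorem~\ref{th:global_convergence} as stated begins ``from a state in $I$,'' but its proof only uses that the starting state is an admissible flock configuration from which centralized AMPC succeeds, so I would first record this mild re-reading of the theorem.

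First I would make precise the effect of each perturbation. Having reached some goal state $\vs^* \in G$, the system is displaced to a perturbed state $\widehat{\vs}$: in (a) the intended optimal action is not applied, so the flock evolves under a suboptimal or null acceleration; in (b) the position or velocity of one agent is shifted by a bounded amount; in (c) the adversary applies one action with opposing objective. In all three cases the outcome is a single well-defined state $\widehat{\vs}\in S$ of the same MDP. If $\widehat{\vs}\in G$ the claim is immediate, so I would assume $\widehat{\vs}\notin G$, i.e. $J(\widehat{\vs}) > \varphi$.

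Next I would treat $\widehat{\vs}$ as the new initial state and verify that the premise of Theorem~\ref{th:global_convergence} is met from $\widehat{\vs}$: namely, that centralized $\ampc$ can steer $\M$ from $\widehat{\vs}$ to a state in $G$ within a finite horizon and finitely many steps. This is exactly where the controllability hypothesis and Theorem~1 of~\cite{tiwari17} enter, as already invoked in the proofs of Lemma~\ref{lem:local_ampc} and Theorem~\ref{th:global_convergence}: controllability guarantees that from any admissible state---in particular $\widehat{\vs}$---PSO can, with a sufficiently long horizon, produce accelerations that decrement the cost by the required $\Delta$, and the Lyapunov function $V(t)=\ell_t$ then forces a monotone descent of the look-ahead cost until $J\leqslant\varphi$. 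Feeding $\widehat{\vs}$ in place of a sampled state from $I$ into Theorem~\ref{th:global_convergence} yields convergence of $\dampc$ to $G$ with probability one, which is the assertion of the corollary.

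The main obstacle is justifying that each perturbation leaves the system inside the region where these convergence hypotheses actually apply---in other words, that $\widehat{\vs}$ is not driven so far that controllability or the flock-connectivity conditions break down. For the bounded perturbations (a) and (b) this is clear, since the displacement is small and the upwash-benefit term $\UB$ maintains connectivity while the algorithm enforces collision avoidance, so $\widehat{\vs}$ remains a physically admissible flock state. For (c) one must additionally argue that a single adversarial action cannot push the flock outside the controllable region; I would reduce this to the same finite-horizon controllability assumption, noting that the adversary's move is itself a bounded action in $A$ and can therefore be countered by a correspondingly longer horizon up to $h_{max}$. Once admissibility of $\widehat{\vs}$ is secured, the corollary follows directly from Theorem~\ref{th:global_convergence}.
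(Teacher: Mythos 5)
Your proposal is correct and takes essentially the same route as the paper: the paper justifies this corollary in one line by applying the centralized AMPC convergence result of~\cite{tiwari17} together with Theorem~\ref{th:global_convergence}, i.e., exactly your reduction of treating the disturbed state as a fresh initial state for $\dampc$. Your extra care about admissibility of the perturbed state (connectivity, controllability after the disturbance) is a refinement the paper leaves implicit, not a different argument.
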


\subsection{Evaluation of the Distributed $\ampc$ Controller}

We comprehensively evaluated $\dampc$ to compute statistical estimates of the success rate of reaching V-formation from an arbitrary initial state in a finite number of steps $m$.  We considered flocks of size $B=\{5,7,9\}$ birds. The specific reachability problem we addressed is as follows. 

Given a flock MDP $\M$ with $B$ birds and the randomized strategy $\sigma: S\,{\mapsto}\,\PD(A)$ of Alg.~\ref{alg:distributed}, estimate the probability of reaching a state $s$ where the cost function $J(s)\,{\leqslant}\,\varphi$, starting from an initial state in the underlying Markov chain $\M_{\sigma}$ induced by $\sigma$ on $\M$.

Since the exact solution to this stochastic reachability problem is intractable (infinite/continuous state and action spaces), we solve it approximately using statistical model checking (SMC). In particular, as the probability estimate of reaching a V-formation under our algorithm is relatively high, we can safely employ the {\em additive error} $(\varepsilon,\delta)$-Monte-Carlo-approximation scheme~\cite{grosu2014isola}. This requires $L$ i.i.d.\ executions (up to a maximum time horizon), determining in $Z_l$ if execution $l$ reaches a V-formation, and returning the mean of the random variables $Z_1,\ldots,Z_L$. 
We compute $\widetilde{\mu}_Z\,{=}\,\sum_{l=1}^LZ_l/L$ by using Bernstein's inequality to fix $L{\propto}\,ln(1/\delta)/\varepsilon^2$ and obtain
$
\mathbb{P}[\mu_Z\,{-}\,\varepsilon\leq\widetilde{\mu}_Z\leq\mu_Z\,{+}\,\varepsilon]\geq{}1\,{-}\,\delta,
$
where $\widetilde{\mu}_Z$ approximates $\mu_Z$ with additive error $\varepsilon$ and probability $1\,{-}\,\delta$. 
In particular, we are interested in a Bernoulli random variable $Z$ returning 1 if the cost $J(s)$ is less than $\varphi$ and 0 otherwise. In this case, we can use the Chernoff-Hoeffding instantiation of the Bernstein's inequality, and further fix the proportionality constant to $N\,{=}\,4\,ln(2/\delta)/\varepsilon$~\cite{herault2004approximate}. 
Executing the algorithm $10^3$ times for each flock size gives us a confidence ratio $\delta\,{=}\,0.05$ and an additive error of $\varepsilon\,{=}\,10^{-2}$.

We used the following parameters: number of birds $B\,\in\,\{5,7,9\}$, cost threshold $\varphi\,{=}\,10^{-1}$, maximum horizon $h_{max}\,{=}\,3$, number of particles in PSO $p\,{=}\,200{\cdot}h{\cdot}B$. $\dampc$ is allowed to run for a maximum of $ m\,{=}\,60$ steps. The initial configurations are generated independently, uniformly at random, subject to the following constraints on the initial positions and velocities: $\forall\:i\in\{1,\ldots,B\}\:\xv_i(0)\in[0,3]\times[0,3]$ and $\vv_i(0)\in[0.25,0.75]\times[0.25,0.75]$.

To perform the SMC evaluation of $\dampc$, and to compare it with the centralized AMPC from~\cite{tiwari17}, we designed the above experiments for both algorithms in C, and ran them on the 2x Intel Xeon E5-2660 Okto-Core, 2.2 GHz, 64 GB platform. 



Our experimental results are given in Table~\ref{tab:improv}. We used three different ways of computing the average number of neighbors for successful runs. Assuming a successful run converges after $m'$ steps, we (1)~compute the average over the first $m'$ steps, reported as ``for good runs until convergence''; (2)~extend the partial $m'$-step run into a full $m$-step run and compute the average over all $m$ steps, reported as ``for good runs over $m$ steps''; or (3)~take an average across $>m$ steps, reported as ``for good runs after convergence'', to illustrate global stability.

\begin{table}[t]
	\scriptsize
    \centering
    \caption{Comparison of DAMPC and AMPC~\cite{tiwari17} on $10^3$ runs.}
    \begin{tabular}{lccccccc}
    	\toprule
        
		& \multicolumn{3}{c}{{$\dampc$}} & 
		\multicolumn{3}{c}{{AMPC}} \\
		\cmidrule(l){2-4}\cmidrule(l){5-7}
		\textsc{Number of Birds}~~~~~~~ & {\centering\textsc{5}} & {\textsc{7}}& {\centering\textsc{9}}
		& {\textsc{5}} & {\textsc{7}} & {\textsc{9}} \\
		
        \midrule
        Success rate, $\widetilde{\mu}_Z$ &
		$0.98$ & $0.92$ & $0.80$ & $0.99$ & $0.95$ & $0.88$\\
        Avg. convergence duration, $m$ &
        $7.40$ & $10.15$ & $15.65$ & $9.01$ & $12.39$ & $17.29$\\
        Avg. horizon, $h$ &
        $1.35$ & $1.36$ & $1.53$ & $1.29$ & $1.55$ & $1.79$\\
        Avg. execution time in sec. &
        $295s$ & $974s$ & $\propto 10^3s$ & $644s$ & $3120s$ & $\propto 10^4s$\\
        
        \midrule
        Avg. neighborhood size, $k$ &&&&&&\\
        \midrule
        for good runs until convergence
        &
        $3.69$ & $5.32$ & $6.35$ & $5.00$ & $7.00$ & $9.00$\\
        for good runs over $m$ steps 
        &
        $3.35$ & $4.86$ & $5.58$ & $5.00$ & $7.00$ & $9.00$\\
        for good runs after convergence
        &
        $4.06$ & $5.79$ & $6.75$ & $5.00$ & $7.00$ & $9.00$\\
        for bad runs &
        $4.74$ & $6.43$ & $6.99$ & $5.00$ & $7.00$ & $9.00$\\
    
        \bottomrule
\end{tabular}
\label{tab:improv}
\end{table}

We obtain a high success rate for~5 and~7 birds, which does not drop significantly for 9 birds. The average convergence duration, horizon, and neighbors, respectively, increase monotonically when we consider more birds, as one would expect. The average neighborhood size is smaller than the number of birds, indicating that we improve over AMPC~\cite{tiwari17} where all birds need to be considered for synthesizing the next action. 

We also observe that the average number of neighbors for good runs until convergence is larger than the one for bad runs, except for 5~birds. 
The reason is that in some bad runs the cost drops quickly to a small value resulting in a small neighborhood size, but gets stuck in a local minimum (e.g., the flock separates into two groups) due to the limitations imposed by fixing the parameters $h_{max}$, $p$, and $m$. The neighborhood size remains small for the rest of the run leading to a smaller average.

Finally, compared to the centralized AMPC~\cite{tiwari17}, $\dampc$ is faster (e.g., two times faster for 5 birds). Our algorithm takes fewer steps to converge. The average horizon of $\dampc$ is smaller. The smaller horizon and neighborhood sizes, respectively, allow PSO to speed up its computation.

\section{Attacking the V (Controller-Attacker Games)}
\label{sec:cag}
In~~\cite{tiwari17}, we introduced \emph{V-formation games}, a class of controller-attacker games, where the goal of the controller is to maneuver the plant (a simple model of flocking  dynamics) into a V-formation,  and  the  goal  of  the  attacker is  to prevent the controller from doing so. Controllers in V-formation games use centralized $\ampc$. We define several classes of attackers, including those that in one move can remove a small number of birds from the flock, or introduce random displacement (perturbation) into the flock dynamics, again by selecting a small number of victim agents. We consider both naive attackers, whose strategies are purely probabilistic, and $\ampc$-enabled attackers, putting them  on par strategically with the controller.

We describe the specialization of the stochastic-game verification problem to
V-formation.  In particular, we present the AMPC-based control strategy for reaching a V-formation, and the various attacker strategies against which we evaluate the resilience of our controller.

\subsection{Controller's Adaptive Strategies}

Given current state $(\vec{x}(t),\vec{v}(t))$, the controller's strategy $\sigma_C$ returns a probability distribution on the space of all possible accelerations (for all birds).  As mentioned above, this probability distribution is specified implicitly via a randomized algorithm that returns an actual acceleration (again for all birds).  This randomized algorithm is the AMPC algorithm, which inherits its randomization from the randomized PSO procedure it deploys.  

When the controller computes an acceleration, it assumes that the attacker does {\em{not}} introduce any disturbances; i.e., the controller uses Eq. \ref{eq:trans} where $\va(t)$ is the only control variable. Note that the controller chooses its next action $\va(t)$ based on the current configuration $(\xv(t),\vv(t))$ of the flock using MPC. The current configuration may have been influenced by the disturbance $\vec{d}(t-1)$ introduced by the attacker in the previous time step.  Hence, the current state need not to be the state predicted by the controller when performing MPC in step $t-1$. Moreover, depending on the severity of the attacker action $\vec{d}(t-1)$, the AMPC procedure dynamically adapts its behavior, i.e.\ the choice of horizon $h$, in order to enable the controller to pick the best control action $\vec{a}(t)$ in response.

\subsection{Attacker's Strategies}
\label{subsec:games}
We are interested in evaluating the resilience of our V-formation controller when it is threatened by an attacker that can remove a certain number of birds from the flock, or manipulate a certain number of birds by taking control of their actuators (modeled by the displacement term in Eq.~\ref{eq:trans}).
We assume that the attack lasts for a limited amount of time, after which the controller attempts to bring the system back into the good set of states. When there is no attack, the system behavior is the one given by Eq.~\ref{eq:nodist}.

\vspace*{-0.5mm}\paragraph{\bf Bird Removal Game.}
In a BRG, the attacker selects a subset of R birds, where $R\,{\ll}\,B$, and removes them from the flock.  The removal of bird $i$ from the flock can be simulated in our framework by setting the displacement $\vd_i$ for bird~$i$ to $\infty$. 
We assume that the flock is in a V-formation at time $t\,{=}\,0$.  
Thus, the goal of the controller is to bring the flock back into a V-formation consisting of $B\,{-}\,R$ birds.

Apart from seeing if the controller can bring the flock back to a V-formation, we also analyze the time it takes the controller to do so. 

\begin{definition}
In a \emph{Bird Removal Game} (BRG), the attacker strategy $\sigma_D$ is defined as follows.  Starting from a V-formation of $B$ birds, i.e., $J(s_0)\leqslant \varphi$, the attacker chooses a subset of $R$ birds, $R \ll B$, by uniform sampling without replacement.  Then, in every round, it assigns each bird $i$ in the subset a displacement $\vd_i=\infty$, while for all other birds $j$, $\vd_j=0$. 

\end{definition}

\vspace*{-0.5mm}\paragraph{\bf Random Displacement Game.}
In an RDG, the attacker chooses the displacement vector for a subset of $R$ birds uniformly from the space $[0,M]\times[0,2\pi]$ with $R \ll B$. This means that the magnitude of the displacement vector is picked from the interval $[0,M]$, and the direction of the displacement vector is picked from the interval $[0,2\pi]$. We vary $M$ in our experiments. The subset of $R$ birds that are picked in different steps are not necessarily the same, as the attacker makes this choice uniformly at random at runtime as well.

The game starts from an initial V-formation. The attacker is allowed a fixed number of moves, say $20$, after which the displacement vector is identically $0$ for all birds.  The controller, which has been running in parallel with the attacker, is then tasked with moving the flock back to a V-formation, if necessary. 
\begin{definition}
In a \emph{Random Displacement Game} (RDG), the attacker strategy $\sigma_D$ is defined as follows.  Starting from a V-formation of $B$ birds, i.e., $J(s_0)\leqslant \varphi$, in every round, it chooses a subset of $R$ birds, $R \ll B$, by uniform sampling without replacement. It then assigns each bird $i$ in the subset a displacement $\vd_i$ chosen uniformly at random from $[0,M]\times[0,2\pi]$, while for all other birds $j$, $\vd_j=0$. After $T$ rounds, all displacements are set to $0$.
\end{definition}

\vspace*{-0.5mm}\paragraph{\bf{AMPC Game.}}
\begin{figure}[t]
	\centering
	\includegraphics[width=0.80\linewidth]{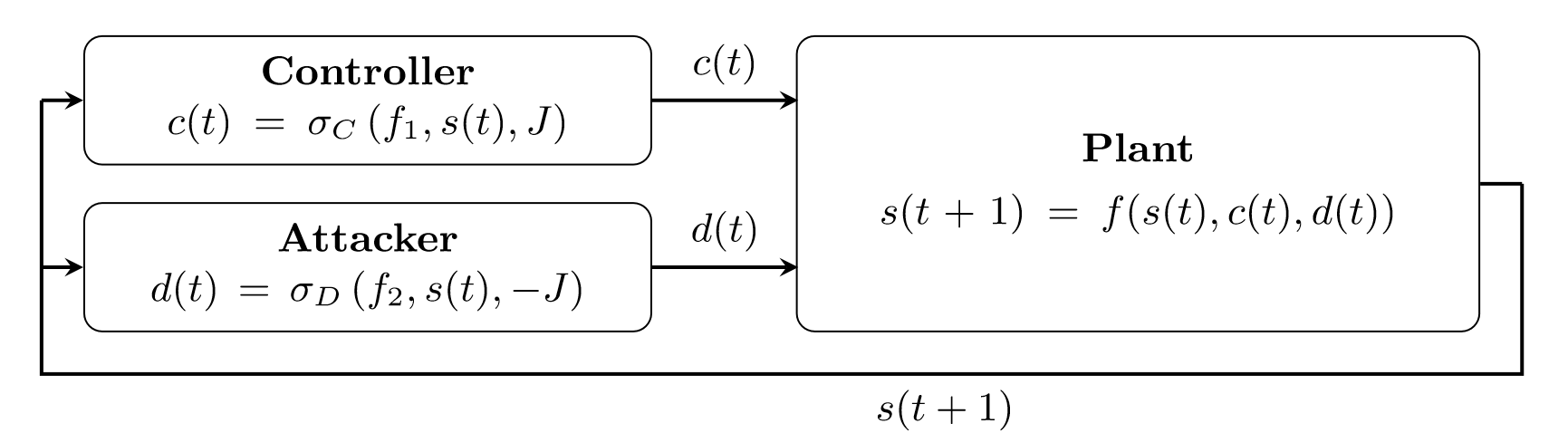}
\vspace*{-2mm}
\caption{Controller-Attacker Game Architecture. The controller and the attacker use randomized strategies $\sigma_C$ and $\sigma_D$ to choose actions $c(t)$ and $d(t)$ based on dynamics, respectively, where $s(t)$ is the state at time $t$, and $f$ is the dynamics of the plant model. The controller tries to minimize the cost $J$, while the attacker tries to maximize it.}
\label{fig:ampc}
     \vspace*{-3mm}
\end{figure}
An AMPC game is similar to an RDG except that the attacker does not use a uniform distribution to determine the displacement vector. The attacker is advanced and strategically calculates the displacement using the AMPC procedure. See Figure~\ref{fig:ampc}.  In detail, the attacker applies AMPC, but assumes the controller applies zero acceleration. Thus, the attacker uses the Eq. \ref{eq:trans} as the model of the flock dynamics.

Note that the attacker is still allowed to have $\vd_i(t)$ be non-zero for only a small number of birds. However, it gets to choose these birds in each step.  It uses the AMPC procedure to simultaneously pick the subset of $R$ birds and their displacements. The objective of the attacker's AMPC is to maximize the cost.

\begin{definition}
In an \emph{AMPC game}, the attacker strategy $\sigma_D$ is defined as follows. Starting from a V-formation of $B$ birds, i.e., $J(s_0)\leqslant \varphi$, in every round, it uses AMPC to choose a subset of  $R$ birds, $R \ll B$, and their displacements $\vd_i$ for bird~$i$ in the subset from $[0,M]\times[0,2\pi]$; for all other birds $j$, $\vd_j=0$. After $T$ rounds, all displacements are set to $0$.
\end{definition}

\begin{thm}[AMPC Convergence]
\label{thm:ampc}
Given an MDP $\M\,{=}\,(S,A,T,J)$ with positive and continuous cost function $J$, and a nonempty set of target states $G\,{\subset}\,S$ with $G\,{=}\,\{s\,|\,J(s)\,{\leqslant}\,\varphi\}$. If the transition relation $T$ is controllable with actions in $A$, then there exists a finite maximum horizon $h_{\mathit{max}}$ and a finite number of execution steps $m$, such that AMPC is able to find a sequence of actions $a_1,\ldots,a_m$ that brings a state in $S$ to a state in $G$ with probability one.
\end{thm}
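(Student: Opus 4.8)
The plan is to use the cost function $J$ itself as a Lyapunov function and to show that the adaptive receding horizon, combined with the probabilistic completeness of PSO, drives this function below $\varphi$ in finitely many steps with probability one. The argument splits into an \emph{existence} part, where controllability produces a witness trajectory and fixes the finite parameters $m$ and $h_{\mathit{max}}$, and a \emph{convergence} part, where AMPC is shown to find such a trajectory almost surely.

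First I would invoke controllability of $T$: for the given initial state and some target state $s^*\in G$, there is a finite action sequence in $A$ steering the system into $G$. I would take (a bound on) its length to be the claimed $m$, and take $h_{\mathit{max}}$ to be the horizon needed to ``see'' far enough ahead along it. This simultaneously establishes the existence of the finite parameters the theorem asserts and yields a witness plan whose $h$-step lookahead cost decreases along a decreasing sequence of levels $\ell_0 > \ell_1 > \cdots > \ell_m = \varphi$. Second, I would set up the level-based descent exactly as in the ARES construction: passing from $\ell_{i-1}$ to $\ell_i$ requires decreasing the lookahead cost by at least $\Delta = \ell_{i-1}/(m-i+1)$. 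Because $J$ is positive and continuous, the set of horizon-$h$ action sequences whose lookahead configuration has cost below $\ell_{i-1}-\Delta$ is nonempty (it contains the witness) and, being the preimage of an open set under a continuous map, has positive measure in the action space. Hence a single PSO invocation with horizon $h\leqslant h_{\mathit{max}}$ and a sufficiently large swarm samples this set, and so returns a next-level action, with some probability $q>0$ bounded away from zero.

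The probability-one conclusion then follows from the adaptive retry mechanism. If no sampled trajectory reaches the next level for any horizon up to $h_{\mathit{max}}$, AMPC increases the number of particles and resets the horizon, so a successful level transition is retried with probability at least $q$ in each round; the probability of never succeeding is $\lim_{r\to\infty}(1-q)^r = 0$. Since there are only finitely many levels, the complement bound $\mathbb{P}[\text{some level fails}] \leqslant \sum_{i=1}^{m}\mathbb{P}[\text{level } i \text{ fails}] = 0$ shows the entire descent into $G$ succeeds with probability one, as required.

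The main obstacle I expect is the second step: arguing rigorously that a \emph{finite} adaptive horizon suffices to decrease the lookahead cost by $\Delta$ at \emph{every} level, even where the instantaneous one-step cost cannot decrease. Controllability delivers reachability of $G$ but not a monotone one-step cost decrease; the crux is that by looking ahead $h\leqslant h_{\mathit{max}}$ steps the controller can ``tunnel through'' a cost ridge or local minimum (cf.\ Fig.~\ref{fig:levels}), so that the lookahead cost along the witness trajectory drops by the required $\Delta$. Making precise that no ridge along the finite controllable witness trajectory demands an unbounded horizon---so that a single finite $h_{\mathit{max}}$ works uniformly---is where the continuity of $J$ and the finite length of the witness must be combined carefully, and it is the step on which the whole Lyapunov argument rests.
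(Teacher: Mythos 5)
Your proposal follows essentially the same route as the paper's own (sketch) proof: a level-based Lyapunov descent in which each macro step must decrease the lookahead cost by $\Delta_i = \ell_{i-1}/(m-i+1) \geqslant \Delta > 0$, with controllability guaranteeing that such a decrease is achievable within some finite horizon and PSO's randomization guaranteeing it is found, so that convergence to $G$ with probability one follows over the finitely many levels. The only real difference is one of rigor: the positive-measure argument and geometric retry bound you spell out, as well as the obstacle you honestly flag (that a single finite $h_{\mathit{max}}$ must suffice to tunnel past cost ridges at \emph{every} level, from whatever state AMPC actually reaches), are precisely the steps the paper compresses into its unproved ``controllability assumption and the fairness assumption about the PSO algorithm,'' remarking only that $h_{\mathit{max}}$ and $m$ are existential and chosen empirically in practice.
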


\begin{proof}
In each (macro-) step of horizon length $h$, from level $\ell_{i-1}$ to level $\ell_i$, AMPC decreases the distance to $\varphi$ by $\Delta_i\,{\geqslant}\,\Delta$, where $\Delta\,{>}\,0$ is fixed by the number of steps $m$ chosen in advance. Hence, AMPC converges to a state in $G$ in a finite number of steps, for a properly chosen $m$. AMPC is able to decrease the cost in a macro step by $\Delta_i$ by the controllability assumption and the fairness assumption about the PSO algorithm. Since AMPC is a randomized algorithm, the result is probabilistic. Note that the theorem is an existence theorem of $h_{\mathit{max}}$ and $m$ whose values are chosen empirically in practice.
\end{proof}

\begin{thm}[AMPC resilience in a C-A game]
\label{thm:resilience}
Given a controller-attacker game, there exists a finite maximum horizon $h_{\mathit{max}}$ and a finite maximum number of game-execution steps $m$ such that AMPC controller will win the controller-attacker game in $m$ steps with probability~1.
\end{thm}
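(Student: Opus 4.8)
The plan is to exploit the fact that, in every attacker strategy under consideration (BRG, RDG, and the AMPC game), the attacker is granted only a \emph{finite} number of moves $T$, after which all displacements $\vd_i$ are identically zero and the plant evolves according to the undisturbed, controllable dynamics of Eq.~\ref{eq:nodist}. This splits any game execution into two phases: an \emph{attack phase} of length $T$, during which the attacker may drive the cost $J$ upward, and a \emph{recovery phase} in which the attacker is inert and the controller alone governs the flock. The theorem then reduces to showing that (i) the attack phase leaves the flock in a recoverable state, and (ii) the recovery phase terminates in $G$ with probability one.

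First I would argue that at the end of the attack phase the flock occupies a well-defined state $\vs_T$ in the (possibly reduced) state space from which recovery is possible. For the RDG and AMPC games this is nearly immediate: each per-round displacement has bounded magnitude $M$ and affects only $R\,{\ll}\,B$ birds, so after $T$ rounds $\vs_T$ is a finite-cost configuration in $S$. For the BRG, setting $\vd_i\,{=}\,\infty$ amounts to removing $R$ birds, leaving a flock of $B\,{-}\,R$ birds whose configuration lies in the corresponding lower-dimensional state space; the connectivity maintained by the upwash term $\UB$ together with the collision-avoidance enforced by the controller keep $\vs_T$ inside the region where the transition relation $T$ remains controllable.

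Second, I would invoke Theorem~\ref{thm:ampc} on the recovery phase. Because the post-attack dynamics are exactly the controllable MDP dynamics and $\vs_T\,{\in}\,S$, Theorem~\ref{thm:ampc} supplies a finite horizon bound $h_{\mathit{max}}$ and a finite number of steps $m'$ such that $\ampc$, started from $\vs_T$, reaches a state in $G$ with probability one; the adaptivity of the receding horizon is precisely what lets the controller pick an $h\,{\leqslant}\,h_{\mathit{max}}$ large enough to absorb however severe the accumulated disturbance at $\vs_T$ happens to be. Setting $m\,{=}\,T\,{+}\,m'$ gives a finite game-execution bound, and concatenating the (arbitrary) attack-phase trajectory with the convergent recovery-phase plan shows the controller wins within $m$ steps with probability one. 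This is, in effect, the single-shot Global stability corollary iterated over a bounded attack window.

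The hard part is the first step: certifying that the attacker cannot, within its $T$ moves, push the flock out of the controllable basin assumed by Theorem~\ref{thm:ampc}. In the AMPC game the attacker is itself adaptive and explicitly maximizes $J$, so one must verify that restricting the per-step displacement to $R$ birds with magnitude at most $M$ caps the attainable cost $J(\vs_T)$ below the level at which controllability or flock connectivity could be irrecoverably lost. Once $J(\vs_T)$ is finite and the flock stays connected, the existence claim of Theorem~\ref{thm:ampc} closes the argument; the remaining content is thus entirely in bounding the reachable set of post-attack states rather than in any fresh convergence analysis.
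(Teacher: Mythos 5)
Your proposal is correct and follows essentially the same route as the paper, whose entire proof is the compressed observation that the flock MDP remains controllable, PSO is fair, and the attack has bounded duration, so the result follows from Theorem~\ref{thm:ampc}; your attack-phase/recovery-phase decomposition with $m = T + m'$ is just that argument spelled out. The ``hard part'' you flag---bounding the post-attack state and certifying it stays in the controllable basin---is not addressed by the paper either, which simply takes controllability of the dynamics in Eq.~\ref{eq:trans} over the whole state space as an assumption.
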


\begin{proof}
Since the flock MDP (defined by Eq.~\ref{eq:trans}) is controllable, the PSO algorithm we use is fair, and the attack has a bounded duration, the proof of the theorem follows from Theorem~\ref{thm:ampc}. 
\end{proof}

\begin{remark}
While Theorem~\ref{thm:resilience} states that the controller is expected to win with probability~1, we expect winning probability to be possibly lower than one in many cases because: (1)~the maximum horizon $h_{\mathit{max}}$ is fixed in advance, and so is (2)~the maximum number of execution steps $m$; (3)~the underlying PSO algorithm is also run with bounded number of particles and time. Theorem~\ref{thm:resilience} is an existence theorem of $h_{\mathit{max}}$ and $m$, while in practice one chooses fixed values of $h_{\mathit{max}}$ and $m$ that could be lower than the required values.
\end{remark}

\subsection{Statistical MC Evaluation of V-Formation Games}
\renewcommand{\majExp}{{2,000}}

The stochastic-game verification problem we address in the context of the V-formation-AMPC algorithm is formulated as follows.  Given a flock MDP $\M$ (we consider the case of $B\,{=}\,7$ birds), acceleration actions $\va$ of the controller, displacement actions $\vd$ of the attacker, the randomized strategy $\sigma_C: S\,{\mapsto}\,\PD(C)$ of the controller (the AMPC algorithm), and a randomized strategy
$\sigma_D: S\,{\mapsto}\,\PD(D)$ for the attacker,
determine the probability of reaching a state $s$ where the cost function $J(s)\,{\leqslant}\,\varphi$ (V-formation in a 7-bird flock), starting from an initial state (in this case this is a V-formation), in the underlying Markov chain induced by strategies $\sigma_C$, $\sigma_D$ on $\M$.

Since the exact solution to this reachability problem is intractable due to the infinite/continuous space of states and actions, we solve it approximately with classical statistical model-checking (SMC).  The particular SMC procedure we use is from~\cite{grosu2014isola} and based on an {\em additive} or {\em absolute-error $(\varepsilon,\delta)$-Monte-Carlo-approximation scheme}.

This technique requires running $N$ i.i.d.\ game executions, each for a given maximum time horizon, determining if these executions reach a V-formation, and returning the average number of times this occurs. 

Each of the games described in Section~\ref{subsec:games} is executed 2,000 times. For a confidence ratio $\delta\,{=}\,0.01$, we thus obtain an additive error of $\varepsilon\,{=}\, 0.1$.
We use the following parameters in the game executions: number of birds $B\,{=}\,7$, threshold on the cost $\varphi\,{=}\,10^{-3}$, maximum horizon $h_{\mathit{max}}\,{=}\,5$, number of particles in PSO $p\,{=}\,20hB$. In BRG, the controller is allowed to run for a maximum of $30$ steps. In RDG and AMPC game, the attacker and the controller run in parallel for $20$ steps, after which the displacement becomes $0$, and the controller has a maximum of $20$ more steps to restore the flock to a V-formation.

To perform SMC evaluation of our AMPC approach we designed the above experiments in C and ran them on the Intel Core i7-5820K CPU with 3.30 GHz and with 32GB RAM available.

\begin{figure}[t]
 \centering
  \includegraphics[width=.495\textwidth]{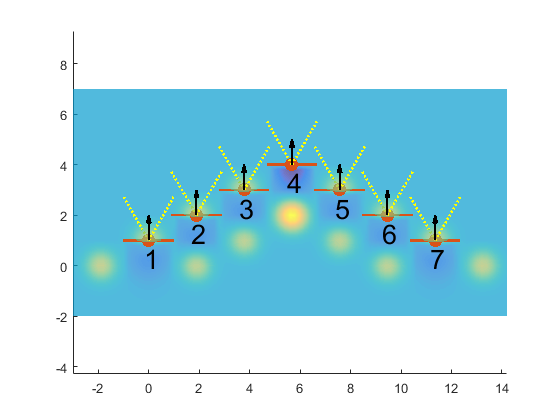}
  \includegraphics[width=.495\textwidth]{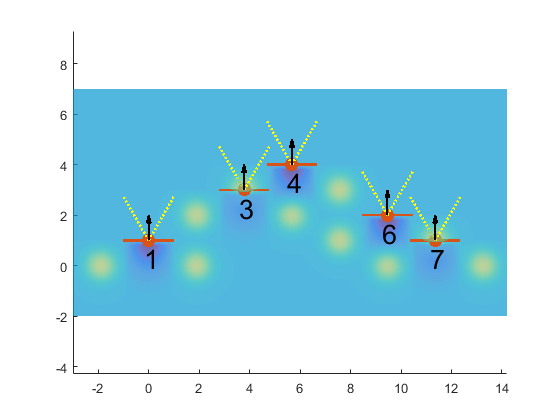}
  \caption{Left: numbering of the birds. Right: configuration after removing Bird 2 and 5. The red-filled circle and two protruding line segments represent a bird's body and wings. Arrows represent bird velocities. Dotted lines illustrate clear-view cones. A brighter/darker background color indicates a higher upwash/downwash.}
  \label{fig:numbering}
  \vspace{3mm}
 \end{figure}

\begin{table}[ht]
\centering
\caption {Results of 2,000 game executions for removing 1 bird with $h_{\mathit{max}}\,{=}\,5$,  $m\,{=}\,40$}
    \begin{tabular}{lcccccccccc}
    \toprule
		&&& Ctrl. success rate, \% &&& Avg. convergence duration &&& Avg. horizon\\ \midrule
    Bird 4  && & $99.9$           && & $12.75$  &&& $3.64$                            \\
    Bird 3  && & $99.8$           && & $18.98$  &&& $4.25$                            \\
    Bird 2  && & $100$           && & $10.82$  &&& $3.45$                            \\ \bottomrule
    \end{tabular}
\label{tab:resRemoveOne}
\end{table}

\begin{table}[ht]
	\centering
	\caption{Results of 2,000 game executions for removing 2 birds with $h_{\mathit{max}} \,{=}\,5$, $m\,{=}\,30$}
	\begin{tabular}{lccccccccccc}
		\toprule
		&&& Ctrl. success rate, \% &&& Avg. convergence duration &&& Avg. horizon\\
      	\midrule
        Birds 2 and 3 & & &$0.8$ & & &$25.18$&&&$4.30$\\
        Birds 2 and 4 & & &$83.1$ & & &$11.11$ &&&$2.94$\\
        Birds 2 and 5 & & &$80.3$ & & &$9.59$ &&&$2.83$\\
        Birds 2 and 6 & & &$98.6$ & & &$7.02$ &&&$2.27$\\
        Birds 3 and 4 & & &$2.0$ & & &$22.86$ &&&$4.30$\\
        Birds 3 and 5 & & &$92.8$ & & &$11.8$ &&&$3.43$\\
		\bottomrule
\end{tabular}
\label{tab:resRemoveTwo}
\end{table}

\begin{table}[t]
	\centering
	\caption{Results of 2,000 game executions for random displacement and AMPC attacks with $h_{\mathit{max}}\,{=}\,5$ and $m\,{=}\,40$ (attacker runs for 20 steps)}
	\begin{tabular}{cccccccccccc}  
		\toprule
		Range of noise &&& Ctrl. success rate, \% &&& Avg. convergence duration &&& Avg. horizon\\
      	\midrule
        &&&\multicolumn{7}{c}{Random displacement game}\\
        \cmidrule(l){3-10}
        $[0,0.50]\times[0,2\pi]$ & & &$99.9$ & & &$3.33$ &&&$1.07$\\
        $[0,0.75]\times[0,2\pi]$ & & &$97.9$ & & &$3.61$ &&&$1.11$\\
        $[0,1.00]\times[0,2\pi]$ & & &$92.3$ & & &$4.14$ &&&$1.18$\\
 		\cmidrule(l){3-10}
        &&&\multicolumn{7}{c}{AMPC game}\\
        \cmidrule(l){3-10}
        $[0,0.50]\times[0,2\pi]$  && & $97.5$    &&& $4.29$ &&& $1.09$\\
        $[0,0.75]\times[0,2\pi]$  && & $63.4$    &&& $5.17$ &&& $1.23$\\
        $[0,1.00]\times[0,2\pi]$  && & $20.0$    &&& $7.30$ &&& $1.47$\\
		\bottomrule
\end{tabular}
\label{tab:resRandomNoise}
\end{table}

\vspace*{-0.5mm}\paragraph{\bf Discussion of the Results}
To demonstrate the resilience of our adaptive controller, for each game introduced in Section~\ref{subsec:games}, we performed a number of experiments to estimate the probability of the controller winning.  Moreover, for the runs where the controller wins, the average number of steps required by the controller to bring the flock to a V-formation is reported as {\em{average convergence duration}}, and the average length of the horizon used by AMPC is reported as {\em{average horizon}}.

The numbering of the birds in Tables~\ref{tab:resRemoveOne} and~\ref{tab:resRemoveTwo} is given in Figure~\ref{fig:numbering}. Bird-removal scenarios that are symmetric with the ones in the tables are omitted. The results presented in Table~\ref{tab:resRemoveOne} are for the BRG game with $R\,{=}\,1$.
In this case, the controller is {\em{almost always}} able to bring the flock back to a V-formation, as is evident from Table~\ref{tab:resRemoveOne}. Note that removing Bird $1$ (or $7$) is a trivial case that results in a V-formation.

In the case when $R\,{=}\,2$, shown in Table~\ref{tab:resRemoveTwo}, the success rate of the controller depends on {\em{which two birds are removed}}. Naturally, there are cases where dropping two birds does not break the V-formation; for example, after dropping Birds~1 and~2, the remaining birds continue to be in a V-formation.  Such trivial cases are not shown in Table~\ref{tab:resRemoveTwo}. Note that the scenario of removing Bird~$1$ (or~$7$) and one other bird can be viewed as removing one bird in flock of $6$ birds, thus not considered in this table. Among the other nontrivial cases, the success rate of controller drops slightly in four cases, and drops drastically in remaining two cases. 
This suggests that attacker of a CPS system can incur more damage by being prudent in the choice of the attack. 

Impressively, whenever the controller wins, the controller needs about the same number of steps to get back to V-formation (as in the one-bird removal case). On average, removal of two birds results in a configuration that has worse cost compared to an BRG with $R\,{=}\,1$. 
Hence, the adaptive controller is able to make bigger improvements (in each step) when challenged by worse configurations. Furthermore, among the four cases where the controller win rate is high, experimental results demonstrate that removing two birds positioned asymmetrically with respect to the leader poses a stronger, however, still manageable threat to the formation. For instance, the scenarios of removing birds~2 and~6 or 3 and~5 give the controller a significantly higher chance to recover from the attack, $98.6\%$ and $92.8\%$, respectively.

Table~\ref{tab:resRandomNoise} explores the effect of making the attacker smarter. Compared to an attacker that makes random changes in displacement, an attacker that uses AMPC to pick its action is able to win more often. This again shows that an attacker of a CPS system can improve its chances by cleverly choosing the attack. For example, the probability of success for the controller to recover drops from $92.3\%$ to $20.0\%$ when the attacker uses AMPC to pick displacements with magnitude in $[0,1]$ and direction in $[0,2\pi]$. The entries in the other two columns in Table~\ref{tab:resRandomNoise} reveal two even more interesting facts.

First, in the cases when the controller wins, we clearly see that the controller uses a longer look-ahead when facing a more challenging attack. This follows from the observation that the average horizon value increases with the strength of attack. This gives evidence for the fact that the adaptive component of our AMPC plays a pivotal role in providing resilience against sophisticated attacks.
Second, the average horizon still being in the range $1$-$1.5$, means that the adaptation in our AMPC procedure also helps it perform better than a fixed-horizon MPC procedure, where usually the horizon is fixed to $h\,{\geqslant}\,2$.
When a low value of $h$ (say $h\,{=}\,1$) suffices, the AMPC procedure avoids unnecessary calculation that using a fixed $h$ might incur.

In the cases where success rate was low (Row~1 and Row~5 in Table~\ref{tab:resRemoveTwo}, and Row~3 of the AMPC game in Table~\ref{tab:resRandomNoise}), we conducted additional 500 runs for each case and observed improved success rates ($2.4\%$, $9\%$ and $30.8\%$ respectively) when we increased $h_{\mathit{max}}$ to~$10$ and $m$ to~$40$. This shows that success rates of AMPC improves when given more resources, as predicted by Theorem~\ref{thm:ampc}.


\section{Related Work}
\label{sec:related}


Organized flight in flocks of birds can be categorized in \emph{cluster 
flocking} and \emph{line formation}~\cite{heppner1974avian}. In cluster 
flocking the individual birds in a large flock seem to be uncoordinated in general. However, the flock moves, turns, and wheels as if it were one organism. In 1987 Reynolds~\cite{Reynolds1987CG} defined his three famous rules describing separation, alignment, and cohesion for individual birds in order to have them flock together. This work has been great inspiration for research in the area of collective behavior and self-organization.

In contrast, line formation flight requires the individual birds to fly in a very specific formation. Line formation has two main benefits for the 
long-distance migrating birds. First, exploiting the generated uplift by birds flying in front, trailing birds are able to conserve 
energy~\cite{lissaman1970formation,Cutts251,weimerskirch2001nature}. Second, in a staggered formation, all birds have a clear view in front as well as a view on their neighbors~\cite{Bajec2009AB}. While there has been quite some effort to keep a certain formation for multiple entities when traveling 
together~\cite{Seiler2002CDC, Gennaro2005CAIC,Dang2015CYBCONF}, only little 
work deals with a task of achieving this extremely important formation from a random starting configuration \cite{Cattivelli2011TSP}. The convergence of bird flocking into V-formation has been also analyzed with the use of combinatorial techniques\cite{Chazelle:2014}. 

Compared to previous work, in~\cite{MPC2007} this question is addressed without using any behavioral rules but as problem of \emph{optimal control}. In~\cite{yang2016love} a cost function was proposed that reflects all major features of V-formation, namely, \emph{Clear View} (CV), \emph{Velocity Matching} (VM), and \emph{Upwash Benefit} (UB). The technique of \gls{mpc} is used to achieve V-formation starting from an arbitrary initial configuration of $n$ birds. \gls{mpc} solves the task by minimizing a functional defined as squared distance from the optimal values of CV, VM, and UB, subject to constraints on input and output. The approach is to choose an optimal \emph{velocity adjustment}, as a control input, at each time-step applied to the velocity of each bird by predicting model behavior several time-steps ahead. 
 
The controller synthesis problem has been widely studied~\cite{Plans2012}.
The most popular and natural technique is Dynamic Programming (DP)~\cite{Bellman:1957}, which improves the approximation of the functional at each iteration, eventually converging to the optimal one given a fixed asymptotic error. Compared to DP, which considers all possible states of the system and might suffer from state-space explosion in case of environmental uncertainties, approximate algorithms~\cite{Henriques2012,Bartocci2016,mannor_cross_2003,bartlett_experiments_2011,stulp_policy_2012,stulp_path_2012} take into account only the paths leading to a desired target. One of the most efficient ones is Particle Swarm Optimization (PSO)~\cite{Kennedy95particleswarm} that has been adopted for finding the next best step of MPC in~\cite{yang2016love}. Although it is a very powerful optimization technique, it has not yet been possible to achieve a high success rate in solving the considered flocking problem.

Sequential Monte-Carlo methods prove to be efficient in tackling the question of control for linear stochastic systems~\cite{chen2009fast}, in particular, Importance Splitting (IS)~\cite{KalajdzicIsola16}. The approach we propose is, however, the first attempt to combine adaptive IS, PSO, and receding-horizon technique for \emph{synthesis of optimal plans for controllable systems}. 
We use MPC to synthesize a plan, but use IS to determine the intermediate fitness-based waypoints.  We use PSO to solve the multi-step optimization problem generated by MPC,  but choose the planning horizon and the number of particles adaptively. These choices are governed by the difficulty to reach the next level.


Adaptive control, and its special case of adaptive model predictive control, typically refers to the aspect of the controller updating its process model that it uses to compute the control action. The field of adaptive control is concerned with the discrepancy between the actual process and its model used by the controller. In our adaptive-horizon MPC, we adapt the lookahead horizon employed by the MPC, and not the model itself.  Hence, the work in this paper is orthogonal to what is done in adaptive control~\cite{adaptive_control,adaptive_mpc}.

Adaptive-horizon MPC was used in~\cite{droge2011adaptive}
to track a reference signal. If the reference signal is unknown, and we have a poor estimate of its future behavior, then a larger horizon for MPC is not beneficial. Thus, the horizon was determined by the uncertainty in the knowledge of the future reference signal. We consider cost-based reachability goals here, which allows us to choose a horizon in a more generic way based on the progress toward the goal. More recently, adaptive horizons were also used in~\cite{krener2016adaptive} for a reachability goal. However, they chose a large-enough horizon that enabled the system to reach states from where a pre-computed local controller could guarantee reachability of the goal. This is less practical than our approach for establishing the horizon.

Prior work on the V-formation problem has focused on giving combinations of \emph{dynamical flight rules} as driving forces. These approaches tend to be distributed in nature, as flight rules describe how an individual bird maneuvers depending on the positions and velocities of the neighbors within its radius of influence.  
For instance, in~\cite{flake1998computational}, the authors extend Reynolds' flocking model~\cite{reynolds1987flocks} with a rule that forces a bird to move laterally away from any bird that blocks its view. This can result in multiple V-shaped clusters, but flock-wide convergence is not guaranteed. The work of~\cite{dimock2003aerodynamic} induces V-formations by extending Reynolds' model with a \emph{drag reduction} rule,
but the final formation tends to oscillate as birds repeatedly adjust the angle of the~V. Another approach, based on three \emph{positioning rules}, is that of~\cite{nathan2008}. It provides an alternative model that produces V-formations. The birds in their model follow three positioning rules: (1) seek the proximity of the nearest bird; (2) seek the nearest position that affords an unobstructed longitudinal view; and (3) attempt to position itself in the upwash of a leading bird.
Their model, however, is limited by the assumption that the birds have a constant longitudinal heading. The authors of~\cite{stonedahl2011finding} attempt to improve upon this approach by handling \emph{turning movements}. This also forms small clusters of birds, each of which is only moderately V-like.


In~\cite{ZhanL13}, the problem of taking an arbitrary initial configuration of $n$ agents to a final configuration where every pair of ``neighbors'' is a fixed distance $d$ apart, and every agent is stationary (its velocity is zero) is considered. They present centralized and distributed algorithms for this problem, both of which use MPC to determine the next action. The problem addressed in~\cite{ZhanL13} is arguably simpler than the V-formation problem we consider. The cost function being minimized in their case is a quadratic convex function. In~\cite{ZhanL13} the proof of convergence uses the fact of existing sequence of states with monotonically decreasing cost. Our cost function is nonconvex and nonlinear, which requires overcoming local minima by horizon and neighborhood adaptation. Both of these concepts are not required, and hence not addressed, in~\cite{ZhanL13}. In the distributed control procedure of~\cite{ZhanL13}, each agent publishes the control value it locally computed, which is then used by other agents to compute their control value. A quadratic number of these ``small steps'' are performed before each agent fixes its control input for the next time step. Our distributed procedure has at most a linear number of these small steps.

Other related work, including~\cite{Fowler02,Dandrea03,Ye17}, focuses on distributed controllers for flight formation that operate in an environment where the (multi-agent) plant is already in the desired formation and the (distributed) controller's objective is to maintain formation in the presence of disturbances (typically on the roll angle of the wing). Moreover, the plants considered are more physically detailed than our plant model in terms of capturing the dynamics of moving-wing aircraft. We plan to consider models of this nature as future work. 

A distinguishing feature of these approaches is the particular formation they are seeking to maintain, including a half-vee~\cite{Fowler02}, a ring and a torus~\cite{Dandrea03}, and a leader-follower formation~\cite{Ye17}.  In contrast, we use distributed AMPC with dynamic neighborhood resizing to bring a flock from a mostly random initial configuration to a stable V-formation.


In the field of CPS security, one of the most widely studied attacks is \emph{sensor spoofing}.  When sensors measurements are compromised, state estimation becomes challenging, which inspired a considerable amount of work on attack-resilient state estimation~\cite{DBLP:journals/tac/FawziTD14,Bullo13:TAC,Pajic14:ICCPS,Pajic15:ICCPS,UAVspoofing}.
In these approaches, resilience to attacks is typically achieved by assuming the presence of redundant sensors, or coding sensor outputs.
In our work, we do not consider sensor-spoofing attacks, but assume the attacker gets control of the displacement vectors (for some of the birds/drones).  We have not explicitly stated the mechanism by which an attacker obtains this capability, but it is easy to envision ways (radio controller, attack via physical medium, or other channels~\cite{savage}) for doing so.


A key focus in CPS security has also been detection of attacks. For example, recent work considers displacement-based attacks on formation flight~\cite{aiaa2016}, but it primarily concerned with detecting which UAV was attacked using an unknown-input-observer based approach. We are not concerned with detecting attacks, but establishing that the adaptive nature of our controller provides attack-resilience for free. Moreover, in our setting, for both the attacker the and controller the state of the plant is completely observable.
In~\cite{saulnier2017resilient}, a control policy based on the robustness of the connectivity graph is proposed to achieve consensus on the velocity among a team of mobile robots, in the present of non-cooperative robots that communicate false values but execute the agreed upon commands. In contrast, we allow the attacker to manipulate the executed commands of the robots. The cost function we use is also more flexible so that we can encode more complicated objectives.

We are unaware of any work that uses statistical model checking to evaluate the resilience of adaptive controllers against (certain classes of) attacks.

\section{Conclusions}
\label{sec:concl}
We first presented ARES, a very general adaptive, receding-horizon synthesis algorithm for MDP-based optimal plans; ARES can be viewed as a model-predictive controller with an adaptive receding horizon (AMPC).  We conducted a thorough performance analysis of ARES on the V-formation problem to obtain statistical guarantees of convergence. For flocks of 7 birds, ARES is able to generate, with high confidence, an optimal plan leading to V-formation in 95\% of the 8,000 random initial configurations we considered, with an average execution time of only 63 seconds per plan.

We next presented $\dampc$, a distributed version of AMPC that uses an adaptive-neighborhood and adaptive-horizon model-predictive control algorithm to generated actions for a controllable MDP so that it eventually reaches a state with cost close to zero, provided that the MDP has such a state. The main contribution of $\dampc$ as a distributed control algorithm is that it adaptively resizes an agent's local neighborhood, while still managing to converge to a goal state with high probability. Our evaluation showed that the value of dynamic neighborhood resizing, where we observed that it can lead to a relatively small average neighborhood size while successfully reaching a goal state.

Finally, to demonstrate the resilience of our adaptive controllers, we introduced a variety of controller-attacker games and carried out a number of experiments to estimate the probability of the controller winning.  Our analysis demonstrated the effectiveness of adaptive controllers in overcoming certain kinds of controller-targeted attacks.

\section*{Acknowledgment}
The authors gratefully acknowledge the significant contributions of Ezio Bartocci, Lukas Esterle, Christian Hirsch, and Junxing Yang to this work.


\bibliographystyle{is-alpha}
\newcommand{\etalchar}[1]{$^{#1}$}


  
\end{document}